\theoremstyle{plain}
\newtheorem{thm}{Theorem}
\newtheorem{lem}[thm]{Lemma}
\newtheorem{prop}[thm]{Proposition}
\theoremstyle{definition}
\newtheorem*{defi}{Definition}
\newtheorem{rem}[thm]{Remark}
\DeclareMathOperator{\diam}{diam}
\DeclareMathOperator{\med}{median}
\DeclareMathOperator*{\argmin}{arg\hbit min}
\newcommand{\var}{\mathrm{Var}}
\newcommand{\eps}{\varepsilon}
\newcommand{\tst}{\textstyle}
\newcommand{\tsum}{{\tst \sum}}
\newcommand{\NN}{\mathbb{N}}
\newcommand{\RR}{\mathbb{R}}
\newcommand{\RD}{\mathbb{R}^D}
\newcommand{\Rplus}{\mathbb{R}_{+}}
\newcommand{\Zplus}{\mathbb{Z}_{+}}
\newcommand{\mcc}{\mathcal{C}}
\newcommand{\mcx}{\mathcal{X}}
\newcommand{\mcy}{\mathcal{Y}}
\newcommand{\bartau}{\bar{\tau}}
\newcommand{\abs}[1]{\lvert #1 \rvert}
\newcommand{\norm}[1]{\lVert #1 \rVert}
\newcommand{\mcn}{\mathcal{N}}
\newcommand{\mcnfin}{\mcn_{\mathrm{fin}}}
\newcommand{\mfm}{\mathfrak{M}}
\newcommand{\mfn}{\mathfrak{N}}
\newcommand{\mfmfin}{\mfm_{\mathrm{fin}}}
\newcommand{\mfnfin}{\mfn_{\mathrm{fin}}}
\newcommand{\tn}{\tilde{n}}
\newcommand{\tz}{\tilde{z}}
\newcommand{\tpi}{\tilde{\pi}}
\newcommand{\txi}{\tilde{\xi}}
\newcommand{\teta}{\tilde{\eta}}
\newcommand{\one}{\mathbbm{1}}
\newcommand{\hbit}{\hspace*{1.5pt}}
\newcommand{\nin}{\noindent}
\newcommand{\khap}{k_{\mathrm{happy}}}
\newcommand{\kmis}{k_{\mathrm{miser}}}
\newcommand{\kal}{k_{\aleph}}
\newenvironment{tightenumerate}
{\begin{list}{\arabic{enumi}.}{\usecounter{enumi} \setlength{\labelwidth}{10mm}
\setlength{\topsep}{1mm} \setlength{\parskip}{1mm}
\setlength{\parsep}{0mm} \setlength{\itemsep}{1mm} 
\setlength{\partopsep}{0mm}}}
{\end{list}}
\newenvironment{tightitemize}
{\begin{list}{$\bullet$}{\usecounter{enumi} \setlength{\labelwidth}{10mm}
\setlength{\topsep}{1mm} \setlength{\parskip}{1mm}
\setlength{\parsep}{0mm} \setlength{\itemsep}{1mm} 
\setlength{\partopsep}{0mm}}}
{\end{list}}
\numberwithin{equation}{section}
\numberwithin{thm}{section}
\begin{document}

\title{Metrics and barycenters for point pattern data}
\author{Raoul M\"uller,\footnote{Work supported by DFG RTG 2088.}\ \footnote{Institute for Mathematical Stochastics, University of G\"ottingen, 37077 G\"ottingen, Germany.} \ Dominic Schuhmacher\footnotemark[2] \ and Jorge Mateu\footnote{Work supported by MTM2016-78917-R from the Spanish Ministry of Economy and Competitiveness.}\ \footnote{Department of Mathematics, University Jaume I, 12071 Castell\'on, Spain.}}

\maketitle

\begin{abstract}
  We introduce the transport-transform (TT) and the relative transport-transform (RTT) metrics between finite point patterns on a general space, which provide a unified framework for earlier point pattern metrics, in particular the generalized spike time and the normalized and unnormalized OSPA metrics. Our main focus is on barycenters, i.e.\ minimizers of a $q$-th order Fr\'echet functional with respect to these metrics. 

We present a heuristic algorithm that terminates in a local minimum and is shown to be fast and reliable in a simulation study. The algorithm serves as an umbrella method that can be applied on any state space where an appropriate algorithm for solving the location problem for individual points is available. We present applications to geocoded data of crimes in Euclidean space and on a street network, illustrating that barycenters serve as informative summary statistics. Our work is a first step towards statistical inference in covariate-based models of repeated point pattern observations. 
\medskip

\nin
\textbf{MSC2010 Subject Classification:} Primary 65C60; Secondary 62-07, 90B80.
\smallskip
  
\nin
  \textbf{Key words and phrases:} Fr\'echet mean, Fr\'echet median, network, optimal transport, point process, unbalanced, Wasserstein.
\end{abstract}

\section{Introduction}
\label{sec:intro}

Point pattern data is abundant in modern scientific studies. From biomedical imagery over geo-referrenced disease cases and positions of mobile phone users to climate change related space-time events, such as landslides, we have more and more complicated data available. See \cite{Samartsidis2019}, \cite{KonstantinoudisEtAl2019}, \cite{ChiaraviglioEtAL2016}, \cite{Lombardo2018} for individual examples and
the textbooks \cite{Diggle2013}, \cite{BaddeleyEtAl2015}, \cite{BartekEtAl2018} for a broad overview of further applications. While a few decades ago data consisted typically of a single point pattern in a low dimensional Euclidean space, maybe with some low-dimensional mark information, we have nowadays often multiple observations of point patterns available that may live on more complicated spaces, e.g.\ manifolds (including shape spaces), spaces of convex sets or function spaces. A setting that has received a particularly large amount of attention recently is point patterns on graphs, such as street networks, see \cite{RakshitEtAl2019}, \cite{MoradiEtAl2018} and \cite{MoradiMateu2019} among others.

Multiple point pattern observations may occur by i.i.d.~replication (e.g.\ of a biological experiment), but may also be governed by one or several covariates or form a time series of possibly dependent patterns. Additional mark information can easily be high-dimensional.

Methodology for treating such point pattern data in all these situations is the subject of ongoing statistical research, see e.g.\ \cite{BaddeleyEtAl2015}. From a more abstract point of view, if we think of a point pattern as an element of a metric space $(\mfn, \tau)$, where the metric $\tau$ reflects the concept of distance in an appropriate problem-related way, there is a number of standard methods which can be applied, including multidimensional scaling, disriminant and cluster analysis techniques. This is a stance already taken in \cite{S2014}, Section~1.4, and \cite{MateuEtAl2015}. In the metric space $(\mfn, \tau)$ we can furthermore define a Fr\'echet mean of order $q \geq 1$; that is, for data $\xi_1,\ldots,\xi_k \in \mfn$ any $\zeta \in \mfn$ minimizing
\begin{equation}
  \sum_{j=1}^k \tau(\xi_j,\zeta)^q.
\end{equation}
Such a $q$-th order mean may serve as a ``typical'' element of $\mfn$ to represent the data, and gives rise to more complex statistical analyses, such as Fr\'echet regression; see \cite{PetersenMueller2019} and \cite{LinMueller2019}.

Two metrics on the space of point patterns that have been widely used are the spike time metric, see \cite{VictorPurpura1997} for one dimension and \cite{DiezEtAl2012} for higher dimension, and the Optimal Subpattern Assignment (OSPA) metric, see ~\cite{SXia2008} and~\cite{SVoVo2008}. In the present paper we introduce the \emph{transport--transform (TT) metric} and its normalized version, the \emph{relative transport--transform (RTT) metric}, which provide a unified framework for the earlier metrics. Both the TT- and the RTT-metrics are based on matching the points between two point patterns on a metric space $(\mcx,d)$ optimally in terms of some power $p$ of $d$ and penalizing points that cannot be reasonably matched. We may interpret these metrics as unbalanced $p$-th order Wasserstein metrics, see Remark~\ref{rem:wasserstein} below. In the present paper we always set $p=q$.

Among others \cite{SchoenbergTranbarger2008}, \cite{DiezEtAl2012} and \cite{MateuEtAl2015} have treated Fr\'echet means of order 1 (medians) for the spike time metric under the name of prototypes. However, computations in 2-d and higher were only possible for very small data sets due to a prohibitive computational cost of $O(n^6)$ for the distance between two point patterns with $n$ points each. In the present work we use an adapted auction algorithm that is able to compute TT- and RTT-distances between point patterns in $O(n^3)$. We further provide a heuristic algorithm that bears some resemblance to a $k$-means cluster algorithm and is able to compute local minima of the barycenter problem very efficiently. This makes it possible to compute ``quasi-barycenters'' for 100 patterns of 100 points in $\RR^2$ in a few seconds when basing the TT-distance on the Euclidean distance between points and choosing $p=q=2$.

In Figure~\ref{fig:showcase} we show some typical barycenters obtained by our algorithm in this setting. We use smaller datasets for better visibility. In each scenario there are three different point patterns distinguished by the different symbols in black. The (pseudo-)barycenter represented by the blue circles captures the characteristics of each dataset rather well. Some minor irregularities, especially in the third panel may be due to the fact that only a (good) local optimum is computed. 
\begin{figure}[ht]
  \includegraphics[scale=0.33]{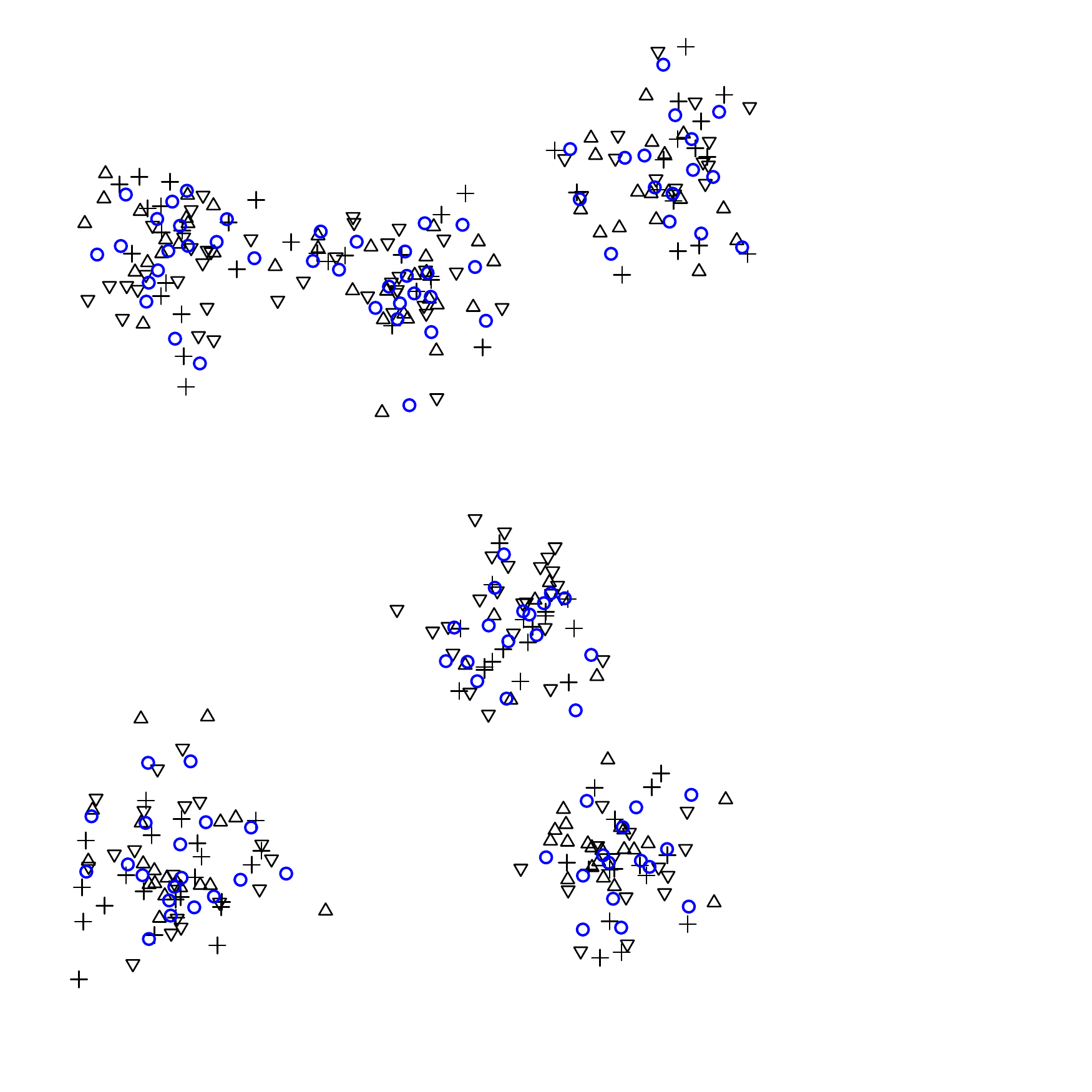}\hspace*{-7mm}\includegraphics[scale=0.33]{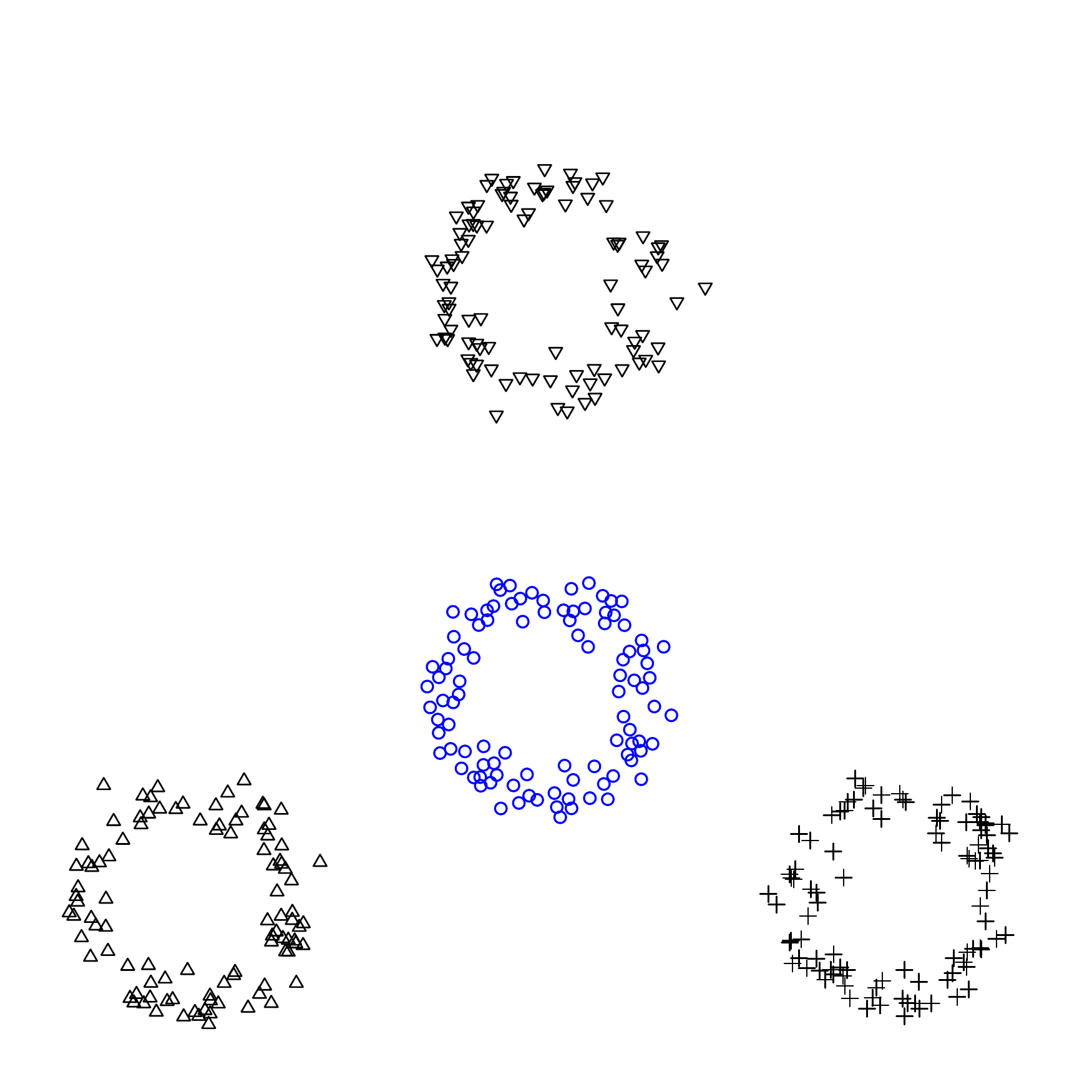}\includegraphics[scale=0.33]{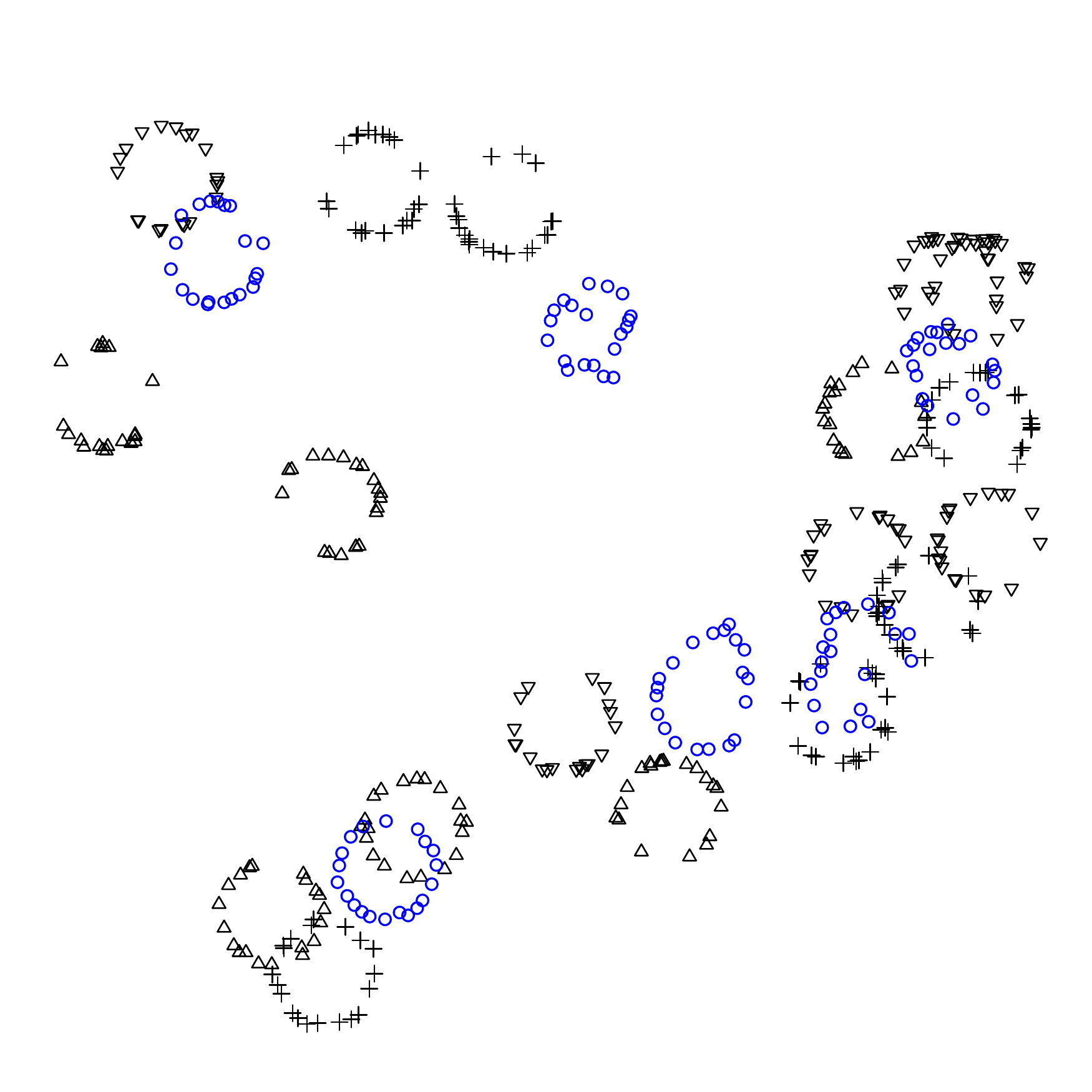}\hspace*{-5mm}
\vspace*{-4mm}
  
  \caption{An example of barycenters computed by our algorithm for three different data sets. In each panel there are three data point patterns indicated by different symbols (black). The resulting (pseudo-)barycenter pattern with respect to Euclidean distance is given by the blue circles ($p=q=2$).}
  \label{fig:showcase}
\end{figure}

More important than being fast for squared Euclidean distance in $\RR^d$ is the fact that our algorithm provides a very general umbrella method, that can in principle be used on \emph{any} underlying space $\mcx$ where an appropriate ``distance function'' between objects is specified as $p$-th power of a metric. All that is required is an algorithm that solves (maybe heuristically) the location problem for individual points in $\mcx$. This allows us e.g.\ to treat the case of point patterns on a network equipped with the shortest-path metric and $p=1$. Figure~\ref{fig:showcase2} gives an example for crime data in Valencia, Spain, which we study in more detail in Section~\ref{sec:applications}.
\begin{figure}[ht]
  \hspace*{2mm}
  \vspace*{-4mm}
  
  \centering
  \includegraphics[width=0.44\linewidth]{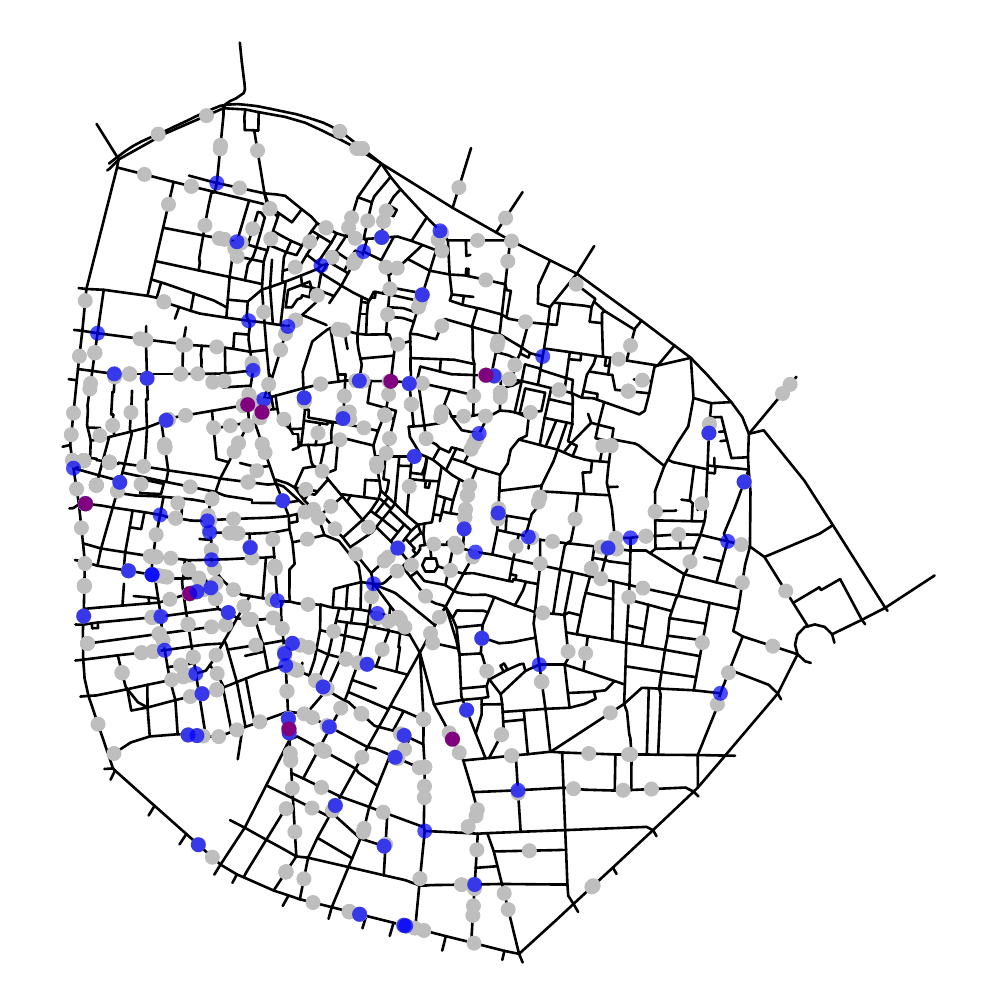}
\vspace*{-5mm}
  
  \caption{An example of a barycenter on a street network. Shown are 8 patterns of assault crimes during the summer months of 2010--2017 in the old town of Valencia (all in grey for better overall visibility). The resulting barycenter with respect to shortest-path distance along the streets is given in blue, with multipoints in purple ($p=q=1$).}
  \label{fig:showcase2}
\end{figure}

The barycenter problem we consider in this paper is closely related to the problem of computing an unbalanced Wasserstein barycenter, see e.g.\ \cite{ChizatEtAl2018}. However, rather than minimizing a Fr\'echet functional on the space of all measures, we minimize on the space of integer valued measures, see Remark~\ref{rem:wasserstein_bary}.

The plan of the paper is as follows. In Section~\ref{sec:ttmetric} we introduce the TT- and RTT-metrics and discuss their relations to spike time, OSPA, and incomplete Wasserstein metrics. Section~\ref{sec:barycenters} specifies what we mean by a barycenter (or Fr\'echet mean) with respect to these metrics and gives an important result that forms the basis for our heuristic algorithm. Two versions of this algorithm, a more direct one and an improved one, which saves computation steps that are unlikely to substantially influence the final result, are discussed in detail in Section~\ref{sec:algorithms}, along with some practical aspects. Section~\ref{sec:simulstudy} contains a larger simulation study, which investigates robustness and runtime performances of the two algorithms for the case of Euclidean distance and $p=2$. Finally, we give two applications to data of crime events on a city map for real data in Section~\ref{sec:applications}. The first one concerns street thefts in Bogot\'a, Colombia. We treat this again as data in Euclidean space, using $p=2$. The second one deals with assault cases in the streets of Valencia, Spain. Here we compute barycenters based on the actual shortest-path distance on the street network and use $p=1$.


  

\section{The transport-transform metric}
\label{sec:ttmetric}

Denote by $\mfnfin$ the space of finite point patterns (counting measures) on a complete separable metric space $(\mcx,d)$, equipped with the usual $\sigma$-algebra $\mcnfin$ generated by the point count maps $\Psi_A\colon \mfnfin \to \RR$, $\xi \mapsto \xi(A)$ for $A \subset \mcx$ Borel measurable.
Elements of $\mfnfin$ are typically denoted by $\xi,\eta,\zeta$ here. As usual we write $\delta_x$ for the Dirac measure with unit mass at $x \in \mcx$. In the present section we mostly use measure notation such as $\xi = \sum_{i=1}^n \delta_{x_i}$, $\xi(\{x\}) \geq 1$ or $\xi + \eta$, but in later sections we also use corresponding (multi)set notation such as $\xi = \{x_1,\ldots,x_n\}$, $x \in \xi$ or $\xi \cup \eta$ where this is unambiguous.

We use $\abs{\xi} = \xi(\mcx)$ to denote the total number of points in the pattern $\xi$.
For $n \in \Zplus = \{0,1,2,\ldots\}$  write $[n] = \{1,2,\ldots,n\}$ (including $[0] = \emptyset$), and denote by $\mfn_n$ the set of point patterns with exactly $n$ points.
We first introduce the metrics we use on $\mfnfin$, which unify and generalize two of the main metrics used previously in the literature.
\begin{defi} ~
  Let $C > 0$ and $p \geq 1$ be two parameters, referred to as \emph{penalty} and \emph{order}, respectively.
  \begin{enumerate}
  \item[(a)] For $\xi = \sum_{i=1}^m \delta_{x_i}, \eta = \sum_{j=1}^n \delta_{y_j} \in \mfnfin$ define the \emph{transport-transform (TT) metric} by  
  \begin{equation}
    \label{eq:ttdef}
      \tau(\xi,\eta) = \tau_{C,p}(\xi,\eta) = \biggl( \min_{(i_1,\ldots,i_l;\hbit j_1,\ldots,j_l) \in S(m,n)} \biggl( (m+n-2l) C^p + \sum_{r=1}^{l} d(x_{i_r},y_{j_r})^p \biggr) \biggr)^{1/p}, 
\end{equation}
where the minimum is taken over equal numbers of pairwise different indices of $[m]$ and $[n]$, respectively, i.e.
\begin{equation*}
\begin{split}
  S(m,n) = \bigl\{ &(i_1,\ldots,i_l;\hbit j_1,\ldots,j_l)\,;\; l \in \{0,1,\ldots,\min\{m,n\}\},\\
    &i_1, \ldots, i_l \in [m] \text{ pairwise different},\, j_1, \ldots, j_l \in [n] \text{ pairwise different} \bigr\}.
\end{split}
\end{equation*}

\item[(b)] For $\xi, \eta \in \mfnfin$ define the \emph{relative transport-transform (RTT) metric} by
  \begin{equation}
    \label{eq:rttdef}
      \bartau(\xi,\eta) = \bartau_{C,p}(\xi,\eta)
    = \frac{1}{\max\{\abs{\xi},\abs{\eta}\}^{1/p}} \hbit \tau_{C,p}(\xi,\eta).
\end{equation}
\end{enumerate}
\end{defi}
We state and prove below that $\tau$ and $\bartau$ are indeed metrics.

The following result simplifies proofs of statements about these metrics and is furthermore invaluable for their computation. The idea is to extend the metric space $(\mcx,d \wedge (2^{1/p} C))$, where $[d \wedge (2^{1/p} C)](x,y) = \min \{ d(x,y), 2^{1/p} C \}$, by setting $\mcx' = \mcx \cup \{\aleph\}$ for an auxiliary element $\aleph \not\in \mcx$ and
  \begin{equation*}
    d'(x,y) = \begin{cases}
       \min \{ d(x,y), 2^{1/p} C \} &\text{if $x,y \in \mcx$;} \\
      C &\text{if $\aleph \in \{x,y\}$ and $x \neq y$;} \\
      0 &\text{if $x=y=\aleph$.}
    \end{cases}
  \end{equation*}
It is shown in Lemma~\ref{lem:metric} that $(\mcx',d')$ is a metric space again. We may then compute distances in the $\tau$ and $\bartau$ metrics by solving an optimal matching problem between point patterns with the same cardinality. For $n \in \NN$ denote by $S_n$ the set of permutations on $[n]$.
\begin{thm} \label{thm:ttassignment}
  Let $\xi = \sum_{i=1}^m \delta_{x_i}, \eta = \sum_{j=1}^n \delta_{y_j} \in \mfnfin$, where w.l.o.g.\ $m \leq n$ (otherwise swap $\xi$ and $\eta$). Set $x_i = \aleph$ for $m+1 \leq i \leq n$ and $\txi = \sum_{i=1}^n \delta_{x_i}$. Then
  \begin{equation*}
    \tau(\xi,\eta) = \biggl( \min_{\pi \in S_n} \sum_{i=1}^n d'(x_i,y_{\pi(i)})^{p} \biggr)^{1/p} \quad \text{and} \quad
    \bartau(\xi,\eta) = \biggl( \frac{1}{n} \min_{\pi \in S_n} \sum_{i=1}^n d'(x_i,y_{\pi(i)})^{p} \biggr)^{1/p}.
  \end{equation*}
\end{thm}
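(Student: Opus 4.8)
The plan is to prove the two displayed formulas by establishing, for the $p$-th powers, the single equality
\[
  \tau(\xi,\eta)^p \;=\; A, \qquad \text{where} \quad A := \min_{\pi \in S_n} \sum_{i=1}^n d'(x_i,y_{\pi(i)})^p,
\]
after which both assertions follow by taking $p$-th roots and, for $\bartau$, dividing by $\max\{\abs{\xi},\abs{\eta}\} = \max\{m,n\} = n$ as in its definition. The conceptual key, which I would isolate first, is the bookkeeping behind the modified distance $d'$: since $\eta$ contains no copy of $\aleph$, every one of the $n-m$ auxiliary points $x_{m+1}=\cdots=x_n=\aleph$ of $\txi$ is necessarily matched by any $\pi$ to a genuine point of $\eta$ at cost $d'(\aleph,\cdot)^p = C^p$, contributing a fixed $(n-m)C^p$. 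For a pair of genuine points the cost is $d'(x_i,y_j)^p = \min\{d(x_i,y_j)^p, 2C^p\}$, and the capping value $2C^p = C^p + C^p$ is exactly the cost of declaring $x_i$ and $y_j$ \emph{both} unmatched in the $S(m,n)$ problem. This identification of ``capped match'' with ``double non-match'' is what ties the two combinatorial problems together.

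For the inequality $\tau(\xi,\eta)^p \leq A$ I would start from an arbitrary $\pi \in S_n$ and manufacture an element of $S(m,n)$ of equal cost. Declare a real index pair $(i,\pi(i))$ with $i \leq m$ to be a genuine match precisely when $d(x_i,y_{\pi(i)})^p \leq 2C^p$, writing $L$ for this set of indices and $l = \abs{L}$, collecting the corresponding pairs into $(i_1,\ldots,i_l;\hbit j_1,\ldots,j_l) \in S(m,n)$ and leaving all remaining real indices unmatched. A direct count shows that the $m+n-2l$ unmatched points consist of the $m-l$ ``over-distance'' real $x$'s, the $n-m$ points of $\eta$ hit by the auxiliary $\aleph$'s, and the $m-l$ correspondingly freed $y$'s, so that $(m+n-2l)C^p$ together with $\sum_{i \in L} d(x_i,y_{\pi(i)})^p$ reproduces $\sum_{i=1}^n d'(x_i,y_{\pi(i)})^p$ exactly. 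Hence the $S(m,n)$ minimum is at most this $\pi$-cost, and taking the minimum over $\pi$ yields $\tau(\xi,\eta)^p \leq A$.

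For the reverse inequality $\tau(\xi,\eta)^p \geq A$ I would take an optimal $(i_1,\ldots,i_l;\hbit j_1,\ldots,j_l)$ attaining the $S(m,n)$ minimum and build a permutation $\pi \in S_n$ whose assignment cost does not exceed the corresponding TT cost. Keep $\pi(i_r)=j_r$ for the $l$ genuine matches (each of cost $d'(x_{i_r},y_{j_r})^p \leq d(x_{i_r},y_{j_r})^p$), route the $n-m$ auxiliary points to $n-m$ of the $n-l$ unmatched $y$-indices (cost $C^p$ each), and pair the remaining $m-l$ real $x$'s with the remaining $m-l$ unmatched $y$'s arbitrarily (cost $d'(\cdot,\cdot)^p \leq 2C^p$ each). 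The count $(n-m)C^p + (m-l)\cdot 2C^p = (m+n-2l)C^p$ shows the total is at most $\sum_{r=1}^l d(x_{i_r},y_{j_r})^p + (m+n-2l)C^p = \tau(\xi,\eta)^p$, so $A \leq \tau(\xi,\eta)^p$. Combining the two bounds gives the claimed identity.

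The routine parts are the index counts in both constructions; the one place that needs care, and which I regard as the main obstacle, is verifying that the two constructions are genuinely inverse at the level of costs --- i.e.\ that the penalty $2C^p$ hidden in the cap and the penalty $2C^p$ paid for a double non-match really coincide, so that no slack is lost in either direction. Once that correspondence is nailed down, both inequalities are immediate, and I would not need to invoke that $d'$ is a metric (Lemma~\ref{lem:metric}); only its explicit definition enters.
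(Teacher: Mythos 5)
Your proposal is correct and follows essentially the same route as the paper's proof: both directions rest on the identification of a capped match (cost $2C^p$) with a double non-match ($C^p+C^p$) and of each auxiliary $\aleph$-point with a fixed deletion cost $C^p$, converting permutations in $S_n$ into elements of $S(m,n)$ of equal cost and vice versa. The only (minor) difference is in the direction $\min_\pi \sum_i d'(x_i,y_{\pi(i)})^p \le \tau(\xi,\eta)^p$: the paper uses optimality of the $S(m,n)$ element to build a permutation attaining \emph{exactly} the same cost, whereas you only need the inequalities $d' \le d$ and $d' \le 2^{1/p}C$, so your construction works for an arbitrary element of $S(m,n)$ — a slight streamlining, but not a different argument.
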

The proof is found in the appendix. 
\begin{rem}[Computation of TT- and RTT-metrics] \label{rem:computation}
  Writing $n$ for the maximum cardinality as in Theorem~\ref{thm:ttassignment}, this result shows that we can compute both $\tau(\xi,\eta)$ and $\bartau(\xi,\eta)$ in worst-time complexity of $O(n^3)$ by using the classic Hungarian method for the assignment problem; see~\cite{Kuhn1955}. In practice we use the auction algorithm proposed in~\cite{Bertsekas1988}, because it has usually much better runtime in our experience, although the default version has a somewhat worse worst-case performance of $O(n^3 \log(n))$.\footnote{There is a modified auction algorithm that can improve the worst-case performance to $O(n^{5/2} \log(n))$; for the performance discussion see~\cite{Bertsekas1988}, page 109. Actually both orders include a factor $c$ in the $\log$ which measures the numerical precision, assumed to be bounded here.}
\end{rem}

\begin{prop} \label{prop:metrics}
  The maps $\tau$ and $\bartau$ are metrics on $\mfnfin$.
\end{prop}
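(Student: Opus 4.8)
The plan is to verify the four metric axioms for each of $\tau$ and $\bartau$, leaning throughout on the assignment representation from Theorem~\ref{thm:ttassignment} together with the fact, established in Lemma~\ref{lem:metric}, that $(\mcx',d')$ is a metric space. Non-negativity is immediate for both maps, and symmetry follows directly from the definitions, since in \eqref{eq:ttdef} the matched index pairs $(i_r; j_r)$ enter symmetrically and the normalizing factor in \eqref{eq:rttdef} is symmetric in $\xi$ and $\eta$. For the identity of indiscernibles I would argue via the assignment form: writing $n=\max\{\abs\xi,\abs\eta\}$, we have $\tau(\xi,\eta)=0$ iff some $\pi\in S_n$ makes every $d'(x_i,y_{\pi(i)})$ vanish; since $d'$ is a metric this forces $x_i=y_{\pi(i)}$ in $\mcx'$ for all $i$. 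Because $d'(\aleph,y)=C>0$ whenever $y\in\mcx$, no padding symbol $\aleph$ can be matched to a genuine point, so the cardinalities must agree and the matched genuine points coincide, i.e.\ $\xi=\eta$; the converse is clear. The same equivalence transfers to $\bartau$ whenever $\max\{\abs\xi,\abs\eta\}>0$, while $\bartau$ of the pair of empty patterns is fixed to $0$ by convention.

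The heart of the argument is the triangle inequality. For $\tau$ I would first record the (easy but essential) observation that the assignment value in Theorem~\ref{thm:ttassignment} is unchanged if \emph{both} patterns are padded with $\aleph$ up to a common size $N$ exceeding all cardinalities in play: surplus $\aleph$-$\aleph$ pairs contribute $d'(\aleph,\aleph)^p=0$, and a short exchange argument (swapping an $\aleph$-to-genuine match against a genuine-to-$\aleph$ match, which does not increase the cost because $d'\le 2^{1/p}C$) shows the optimum is independent of $N$. Having placed $\txi,\teta,\tzeta$ on a common index set $[N]$, I would take permutations $\pi,\sigma\in S_N$ optimal for $(\xi,\eta)$ and $(\eta,\zeta)$, use the composite $\sigma\circ\pi$ as a feasible assignment for $(\xi,\zeta)$, apply the triangle inequality for $d'$ termwise, and conclude by Minkowski's inequality in $\ell^p$; the reindexing $j=\pi(i)$ turns the second summand into $\tau(\eta,\zeta)$, yielding $\tau(\xi,\zeta)\le\tau(\xi,\eta)+\tau(\eta,\zeta)$.

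The triangle inequality for the normalized metric $\bartau$ is the genuinely delicate step and the one I expect to be the main obstacle, since dividing by the pair-dependent factors $\max\{\abs\xi,\abs\eta\}^{1/p}$ destroys the clean $\ell^p$ structure used above. Writing $a=\abs\xi$, $b=\abs\eta$, $c=\abs\zeta$ and using symmetry to assume $a\le c$, I would split into cases according to the position of the middle cardinality $b$. When $b\le\max\{a,c\}=c$ one has $\max\{a,b\}\le c$ and $\max\{b,c\}=c$, so each normalizing factor on the right-hand side is at most $c^{1/p}=\max\{a,c\}^{1/p}$; dividing the inequality for $\tau$ just established by $c^{1/p}$ then suffices. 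The hard case is $b>c\ge a$, where the left-hand factor $\max\{a,c\}^{1/p}=c^{1/p}$ is strictly smaller than the right-hand factors $b^{1/p}$, so the bare triangle inequality for $\tau$ is too weak. Here I would exploit the transform penalties: padding to size $b$ shows $\tau(\xi,\eta)^p\ge (b-a)C^p$ and $\tau(\eta,\zeta)^p\ge (b-c)C^p$, i.e.\ both right-hand distances are forced to be large because the surplus points of $\eta$ must be deleted, and I would combine these lower bounds with the structure of an optimal matching for $(\xi,\zeta)$ (each of whose matched genuine pairs costs at most $(2^{1/p}C)^p=2C^p$) to recover the missing factor $(c/b)^{1/p}$. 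Controlling this interplay between the cut-off $2^{1/p}C$, the penalty $C$, and the normalization is the crux, and I expect to need a careful, possibly case-by-case, accounting of matched versus deleted points to make the constants close.
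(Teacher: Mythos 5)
Your treatment of $\tau$ is, in substance, the paper's own proof: identity and symmetry are routine, and the triangle inequality is obtained exactly as you describe, by padding with $\aleph$ to a common cardinality, composing optimal permutations, and applying the termwise triangle inequality for $d'$ together with Minkowski in $\ell^p$. Your ``easy'' case for $\bartau$ (middle cardinality $b\le\max\{a,c\}$) is also precisely the paper's first case. The genuine gap is exactly where you place it: the case $b>c\ge a$ is never proved, and the ingredients you list cannot prove it. Writing $u=\tau(\xi,\eta)$, $v=\tau(\eta,\zeta)$, $w=\tau(\xi,\zeta)$, the constraints $w\le u+v$, $u^p\ge(b-a)C^p$, $v^p\ge(b-c)C^p$, $w^p\le(a+c)C^p$ are simultaneously satisfiable together with $w/c^{1/p}>(u+v)/b^{1/p}$ (take $p=1$, $C=1$, $a=c=2$, $b=4$, $u=v=2$, $w=4$), so no ``accounting'' that uses only these bounds can close the case.

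For comparison, the paper's device for this case is to pad the two endpoint patterns with two \emph{distinct} auxiliary symbols $\aleph$ and $\aleph'$ at mutual distance $d'(\aleph,\aleph')=2^{1/p}C$ (this is why Lemma~\ref{lem:metric} allows $k$ auxiliary points), to claim $\bartau(\xi,\eta)\le\bartau(\txi,\teta)$, and then to invoke the equal-cardinality case via Theorem~\ref{thm:ttassignment}. You should be aware, however, that this step of the paper rests on the assertion that an optimal permutation for $(\xi,\eta)$ extends to an \emph{optimal} one for $(\txi,\teta)$ by pairing $\aleph$ with $\aleph'$, and that assertion fails when the endpoint patterns are far apart: it is then strictly cheaper to pair every genuine point with a padding symbol at cost $C^p$ each. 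Worse, the violating numbers above are realized by actual point patterns: take $\xi$ and $\zeta$ with two points each, all four points pairwise at distance at least $2^{1/p}C$, and $\eta=\xi+\zeta$; then $\bartau(\xi,\zeta)=2^{1/p}C$ while $\bartau(\xi,\eta)+\bartau(\eta,\zeta)=2\cdot 2^{-1/p}C=2^{1-1/p}C$, so the triangle inequality for $\bartau$ genuinely fails whenever $1\le p<2$ and $\mcx$ contains four such points. So your instinct that this case is the crux is correct in a stronger sense than you intended: in the stated generality it cannot be closed at all, by you or by the paper. This family of examples is neutralized only for $p\ge 2$, where $(1+s^p)^{2/p}\le 1+s$ for $s\in[0,1]$, or under a diameter restriction such as $\diam(\mcx)\le C$ (the OSPA regime), and any correct proof would have to make such a restriction explicit.
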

The proof may be found in the appendix.

\begin{prop}[Generalization of previous metrics between point patterns] \label{prop:comparison} ~
  \begin{enumerate}
  \item[(a)] If $p=1$, then for any $\xi,\eta \in \mfnfin$
    \begin{equation}
    \label{eq:spiketime}
      \tau(\xi,\eta) = \min_{(\xi_0, \ldots, \xi_N)} \sum_{i=0}^{N-1} c_{\text{elem}}(\xi_i,\xi_{i+1}),
    \end{equation}
  where the minimum is taken over all $N \in \NN$ and all paths $(\xi_0, \ldots, \xi_N) \in \mfnfin^{N+1}$ such
  that $\xi_0 = \xi$, $\xi_N = \eta$, and from $\xi_i$ to $\xi_{i+1}$ either a single point is added or deleted at cost $c_{\text{elem}}(\xi_i,\xi_{i+1}) = C$ or a single point is moved from $x$ to $y$ at cost $c_{\text{elem}}(\xi_i,\xi_{i+1}) = d(x,y)$.\\
  Thus the TT-metric is the same as the spike time metric (using add and delete penalties $P_a=P_d=C$ and a move penalty $P_m=1$), which was originally introduced on $\Rplus$ by~\cite{VictorPurpura1997} and generalized to metric spaces by~\cite{DiezEtAl2012}.
\item[(b)] If $\diam(\mcx) = \sup_{x,y \in \mcx} d(x,y) \leq 2^{1/p} C$, then for any $\xi = \sum_{i=1}^m \delta_{x_i}, \eta = \sum_{j=1}^n \delta_{y_j} \in \mfnfin$, assuming w.l.o.g.\ $m \leq n$
  \begin{equation}
  \label{eq:ospa}
     \bartau(\xi,\eta)^p = \frac{1}{n} \biggl( (n-m) C^p + \min_{\pi \in S_n} \sum_{i=1}^m d(x_i,y_{\pi(i)})^p \biggr).\\
   \end{equation}
   Thus the RTT-metric is the same as the OSPA metric, introduced in~\cite{SXia2008} and~\cite{SVoVo2008}. Note that in the definition of the OSPA metric $\diam(\mcx) \leq C \leq 2^{1/p} C$ was either required or enforced by taking the minimum of $d$ with $C$. 
   \end{enumerate}
 \end{prop}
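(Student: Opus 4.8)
The plan is to treat the two parts separately: part~(a) by a two-sided inequality, and part~(b) as an almost immediate consequence of Theorem~\ref{thm:ttassignment}. Throughout part~(a) write $\xi=\sum_{i=1}^m\delta_{x_i}$, $\eta=\sum_{j=1}^n\delta_{y_j}$, and let $\tau_{\mathrm{path}}(\xi,\eta)$ denote the right-hand side of \eqref{eq:spiketime}. Note that with $p=1$ the defining formula \eqref{eq:ttdef} uses the raw distances $d(x_{i_r},y_{j_r})$, so no truncation is involved and the construction below matches costs exactly.

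First I would show $\tau_{\mathrm{path}}\le\tau$. Take an optimal index selection $(i_1,\ldots,i_l;\hbit j_1,\ldots,j_l)\in S(m,n)$ realizing the minimum in \eqref{eq:ttdef} and build a path explicitly: perform $l$ move operations sending $x_{i_r}$ to $y_{j_r}$ (cost $\sum_{r=1}^l d(x_{i_r},y_{j_r})$), then delete the $m-l$ unmatched points of $\xi$ and add the $n-l$ unmatched points of $\eta$ (cost $(m-l)C+(n-l)C$). Each intermediate configuration is a legitimate counting measure and each step is a single elementary operation, so the path has total cost $(m+n-2l)C+\sum_{r=1}^l d(x_{i_r},y_{j_r})=\tau(\xi,\eta)$, whence $\tau_{\mathrm{path}}\le\tau$. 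For the reverse inequality $\tau\le\tau_{\mathrm{path}}$ I would exploit that $\tau$ is already known to be a metric (Proposition~\ref{prop:metrics}). It suffices to observe that a single elementary step never undercuts the induced TT-distance, i.e.\ $c_{\text{elem}}(\zeta,\zeta')\ge\tau(\zeta,\zeta')$: adding or deleting one point costs $C$ while $\tau(\zeta,\zeta')\le C$ (match all common points to themselves, leave one unmatched), and moving a point from $x$ to $y$ costs $d(x,y)$ while $\tau(\zeta,\zeta')\le d(x,y)$ (match $x$ to $y$, the rest to themselves). Summing along any admissible path $(\xi_0,\ldots,\xi_N)$ and invoking the triangle inequality for $\tau$ gives $\sum_{i}c_{\text{elem}}(\xi_i,\xi_{i+1})\ge\sum_i\tau(\xi_i,\xi_{i+1})\ge\tau(\xi,\eta)$; minimizing over paths yields $\tau\le\tau_{\mathrm{path}}$, and the two inequalities together establish \eqref{eq:spiketime}.

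For part~(b) I would apply Theorem~\ref{thm:ttassignment} directly and substitute the explicit form of $d'$. The hypothesis $\diam(\mcx)\le 2^{1/p}C$ forces $d'(x,y)=\min\{d(x,y),2^{1/p}C\}=d(x,y)$ for all genuine points $x,y\in\mcx$, so the truncation is inactive, while $d'(\aleph,y)=C$ for every $y\in\mcx$. In the optimal permutation $\pi\in S_n$ of Theorem~\ref{thm:ttassignment}, with $x_i=\aleph$ for $m+1\le i\le n$, the $n-m$ auxiliary coordinates each contribute $C^p$ and the $m$ genuine coordinates contribute $d(x_i,y_{\pi(i)})^p$. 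Hence the assignment cost equals $(n-m)C^p+\sum_{i=1}^m d(x_i,y_{\pi(i)})^p$, and minimizing over $\pi\in S_n$ --- equivalently over injections $[m]\to[n]$ for the genuine points --- reproduces \eqref{eq:ospa} after dividing by $n$.

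I expect the work to be essentially bookkeeping rather than conceptual. The only place requiring genuine care is the explicit path construction in part~(a): one must check that every intermediate pattern is a valid element of $\mfnfin$ and that the move/delete/add costs telescope to exactly the TT-value. The reverse inequality there is cheap precisely because $\tau$ has already been shown to be a metric in Proposition~\ref{prop:metrics}, which spares us the otherwise delicate combinatorial argument that an arbitrary path can be reduced to a matching. In part~(b) the single point to verify is that the diameter condition renders the $2^{1/p}C$-truncation in $d'$ vacuous, so that the auxiliary-point distances are exactly $C$ and the minimized sum splits cleanly into the OSPA form.
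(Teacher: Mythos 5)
Your proof is correct, and part (a) takes a genuinely different route from the paper's. The paper argues by normalizing an arbitrary admissible path: using the triangle inequality for $d$, it claims one may assume w.l.o.g.\ that only moves from $\xi$-points to $\eta$-points occur, only $\eta$-points are added, and only $\xi$-points are deleted, after which the path-minimization problem is literally the matching problem~\eqref{eq:ttdef}. You instead prove a two-sided inequality: the easy direction by constructing an explicit path from an optimal matching (same construction the paper implicitly uses), and the hard direction by observing that each elementary operation dominates the TT-distance of its endpoints, $c_{\text{elem}}(\zeta,\zeta')\ge\tau(\zeta,\zeta')$, and then telescoping via the triangle inequality for $\tau$ from Proposition~\ref{prop:metrics}. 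This trades the paper's combinatorial exchange argument (which, stated as tersely as it is, would require care to handle points that are added and then moved, moved several times, or moved and then deleted) for a clean appeal to the already-established metric property; there is no circularity, since the paper proves Proposition~\ref{prop:metrics} from Theorem~\ref{thm:ttassignment} and Lemma~\ref{lem:metric} without reference to Proposition~\ref{prop:comparison}. The paper's version, in return, is self-contained in that it never needs $\tau$ to be a metric and exhibits the canonical structure of optimal paths. For part (b) the two arguments coincide: the paper calls it an immediate consequence of Theorem~\ref{thm:ttassignment}, and you simply spell out the bookkeeping (the diameter hypothesis deactivates the $2^{1/p}C$-truncation in $d'$, the $n-m$ auxiliary coordinates each contribute $C^p$, and minimizing over $S_n$ restricted to the genuine coordinates is minimization over injections $[m]\to[n]$), which is exactly what the paper leaves to the reader.
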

 It can be seen from the proof in the appendix that for $p > 1$ the right hand side of~\eqref{eq:spiketime} and for $\diam(\mcx) > 2 C$ 
 the right hand side of~\eqref{eq:ospa} will not be metrics in general.

\begin{rem}[Computation of spike time distances]  \label{rem:spiketime}
  The spike time distances in \cite{VictorPurpura1997} and \cite{DiezEtAl2012} allowed for separate add and delete penalties $P_a$ and $P_d$, as well as a move penalty $P_m$ (factor in front of $d(x,y)$). We set here $P_a=P_d=C$ to obtain a proper metric and divide distances by $P_m$, which is just a scaling. Thus the parameter $C = P_a/P_m = P_d/P_m$ is all that remains.

  \nin
  As noted at the end of Section~4 in \cite{DiezEtAl2012}, having different add and delete penalties may be useful for controlling the total number of points in a barycenter point pattern. Let us point out therefore that Theorem~\ref{thm:ttassignment} is easily adapted to this more general situation by setting $d'(x,y) = \min\{d(x,y), 2^{1/p} (P_a+P_d)\}$, $d'(\aleph,y) = P_a$ and $d'(x,\aleph) = P_d$ for all $x,y \in \mcx$.

  \nin
  In particular this yields a worst-time complexity of $O(n^3)$ for general (maybe asymmetric) spike time distances in general metric spaces, which is a substantial improvement over the $O(n^6)$ complexity of the incremental matching algorithm presented in~\cite{DiezEtAl2012}. 
\end{rem}

\begin{rem}[Unbalanced Wasserstein metrics] \label{rem:wasserstein}
  The TT- and RTT-metrics can be seen as unbalanced Wasserstein metrics, see e.g.~\cite{ChizatEtAl2018}, \cite{LieroEtAl2018} and the references therein. Minimizing over the space $\mfmfin$ of all finite measures on $\mcx \times \mcx$, we obtain the TT-distance as a solution to a particular instance of the unbalanced optimal transport problem in \cite{ChizatEtAl2018}, Definition~2.11, namely
\begin{equation}  \label{eq:chizat}
  \tau(\xi,\eta)^p = \inf_{\gamma \in \mfmfin} \biggl( \int_{\mcx \times \mcx} d(x,y)^p \; \gamma(dx, dy) + C^p \norm{\xi-\gamma_1}_{\mathrm{TV}} + C^p \norm{\eta-\gamma_2}_{\mathrm{TV}} \biggr),
  \end{equation}
where $\gamma_1 = \gamma(\cdot \times \mcx)$ and $\gamma_2 = \gamma(\mcx \times \cdot)$ denote the marginals of $\gamma$, and $\norm{\cdot}_{\mathrm{TV}}$ is the total variation norm of signed measures; specifically $\norm{\mu-\nu}_{\mathrm{TV}} = \sup_{A} (\mu(A)-\nu(A)) + \sup_{A} (\nu(A)-\mu(A))$ for $\mu, \nu \in \mfmfin$, where the suprema are taken over all measurable subsets of $\mcx$.

\nin
Equation~\eqref{eq:chizat} can be shown as follows. It is straightforward to see that we may take the infimum on the right hand side only over $\gamma \in \mfmfin$ with marginals $\gamma_1 \leq \xi$ and $\gamma_2 \leq \eta$, because any additional mass in $\gamma$ may be removed without increasing the total cost of $\gamma$.
Writing $\xi = \sum_{i=1}^n \delta_{x_i}$ and $\eta = \sum_{i=1}^n \delta_{y_i}$ with the help of additional points at $\aleph$ (if necessary), we obtain by similar arguments as in the proof of Theorem~\ref{thm:ttassignment} that the latter problem is equivalent to the discrete transportation problem
  \begin{equation*}
  \begin{split}
    &\min_{(\gamma_{ij})_{1 \leq i,j \leq n}} \sum_{i,j=1}^{n} d'(x_i,y_j)^p \cdot \gamma_{ij} \quad \text{subject to}\\[1mm]
    \tsum_{j = 1}^n \gamma_{ij} = 1 &\ \text{for all $i$}, \quad
    \tsum_{i = 1}^n \gamma_{ij} = 1 \ \text{for all $j$}, \quad
    \gamma_{ij} \geq 0 \ \text{for all $i,j$}.
  \end{split}
\end{equation*}
It is a well-known fact from linear programming that this problem always has a solution $\gamma_{ij} \in \{0,1\}$, $1 \leq i,j \leq n$; see e.g.\ \cite{LuenbergerYe2008}, Section~6.5. We may therefore conclude from Theorem~\ref{thm:ttassignment} that Equation~\eqref{eq:chizat} holds and that the infimum on the right hand side is attained for $\gamma = \sum_{i,j=1}^n \one\{x_i, y_j \neq \aleph\} \gamma_{ij} \delta_{(x_i,y_j)}$.
\end{rem}

In principle, Remark~\ref{rem:wasserstein} allows us to specialize results and algorithms for unbalanced Wasserstein metrics to TT- and RTT-metrics.
However, the discrete setting we consider here is sometimes not included in the general theorems or requires a more specialized treatment. Algorithms for computing unbalanced transport plans are typically derived from balanced optimal transport algorithms; a selection can be found in~\cite{Chizat2017}. The auction algorithm we use in this paper is derived from the auction algorithm used for balanced assignment problems in a similar way.

\section{Barycenters with respect to the TT-metric}
\label{sec:barycenters}

For data on quite general metric spaces, barycenters can formalize the idea of a center element representing the data. In the case of $\mfnfin$ we are thus looking for a center point pattern that gives a good first order representation of a set of data point patterns $\xi_1,\ldots,\xi_k$. More formally we may define a barycenter as the (weighted) $q$-th order Fr\'echet mean with respect to $\tau$; see \cite{Frechet1948}.

\begin{defi}
  For $k \in \NN$ let $\xi_1,\ldots,\xi_k \in \mfnfin$ be data point patterns and $\lambda_1,\ldots,\lambda_k > 0$ with $\sum_{j=1}^k \lambda_j = 1$ be weights. Let furthermore $q \geq 1$. Then we call any
\begin{equation} \label{eq:barydef}
  \zeta_* \in \argmin_{\zeta \in \mfnfin} \sum_{j=1}^k \lambda_j \tau(\xi_j,\zeta)^q
\end{equation}
a \emph{(weighted) barycenter of order $q$}. If no weights are specified we tacitly assume that $\lambda_j = 1/k$ for $1 \leq j \leq k$, leading to an ``unweighted'' barycenter.
\end{defi}

\begin{rem} \label{rem:frechet}
  For $q = 2$ barycenters on general metric spaces are simply known as (empirical) \emph{Fr\'echet means}. For $q=1$ they are sometimes known as \emph{Fr\'echet medians}. This comes from the fact that given $x_1,\ldots,x_k \in \RD$, we have  
\begin{equation}  \label{eq:frechet2}
  \argmin_{z \in \RD} \sum_{j=1}^k \norm{x_j-z}^2 = \frac{1}{k} \sum_{j=1}^k x_j 
\end{equation}
(the $\argmin$ is unique here), and that given $x_1, \ldots,x_k \in \RR$, we have
\begin{equation}  \label{eq:frechet1}
  \argmin_{z \in \RR} \sum_{j=1}^k \norm{x_k-z} = \med \{x_1,\ldots,x_k\},
\end{equation}
where the right hand side denotes the set of medians~$\bigl\{z \in \RR;\, \#\{j;\, x_j \leq z\} = \#\{j;\, x_j \geq z\} \bigr\}$.
\end{rem}

\begin{rem} \label{rem:wasserstein_bary}
  As seen in Remark~\ref{rem:wasserstein} we may interpret $\tau$ as an unbalanced Wasserstein metric. There has been a great deal of research on Wasserstein barycenters (in the Fr\'echet mean sense as above, see e.g.\ \cite{AguehCarlier2011} or \cite{CuturiDoucet2014}), which more recently also extends to unbalanced Wasserstein metrics, see e.g.\ \cite{ChizatEtAl2018} or \cite{SchmitzEtAl2018}. In addition to the fact that much of the corresponding theory is not well adapted to the case of discrete input measures, with the notable exception of \cite{AnderesEtAl2016}, we point out that a fundamental difference of~\eqref{eq:barydef} lies in the fact that we minimize over the space of integer-valued rather than general measures.
\end{rem}
  
In what follows we always set $p = q$ and choose this number mostly $\in \{1,2\}$.
We refer to the resulting barycenters simply as $1$- and \emph{$2$-barycenter} or as \emph{point pattern median} and \emph{point pattern mean}, respectively. Point pattern medians have been introduced under the name of \emph{prototypes} in \cite{SchoenbergTranbarger2008} on $\RR$ and studied in higher dimensions in~\cite{DiezEtAl2012} and~\cite{MateuEtAl2015}. However, in these papers the applicability was limited to rather small datasets due to the large computation cost of $O(n^6)$ mentioned in Remark~\ref{rem:spiketime}.

Using the construction from Theorem~\ref{thm:ttassignment}, we may reformulate the barycenter problem as a multidimensional assignment problem, generalizing Lemma~16 in~\cite{KolianderEtAl2018}. Note that for the TT-metric we can add an arbitrary number of points at $\aleph$ to both point patterns without changing the minimum in Theorem~\ref{thm:ttassignment}.

\begin{prop} \label{prop:mdagengen}
  For point patterns $\xi_j = \sum_{i=1}^{n_j} \delta_{x_{ij}}$, $j \in [k]$, let $\tn := \bigl\lfloor \frac{2}{k+1} \sum_{j=1}^{k} n_j \bigr\rfloor$ and $n \geq \max\{\tn, n_j;\, 1 \leq j \leq k\}$. Set $x_{ij} = \aleph$ for $n_j+1 \leq i \leq n$ and $\txi_j = \sum_{i=1}^{n} \delta_{x_{ij}}$ for any $j \in [k]$.

  Then for any $\pi_{*,1},\ldots,\pi_{*,k} \in S_{n}$ jointly minimizing
  \begin{equation} \label{eq:mdagengen}
    \sum_{i=1}^{n} \min_{z \in \mcx'} \sum_{j=1}^k d'(x_{\pi_j(i),j},z)^p
  \end{equation}
  the point pattern $\zeta_*\vert_{\mcx}$ with $\zeta_* = \sum_{i=1}^{n} \delta_{z_{i}}$, where $z_i \in \argmin_{z \in \mcx'} \sum_{j=1}^k d'(x_{\pi_{*,j}(i),j},z)^p$ is a $p$-th order barycenter with respect to the TT-metric.
\end{prop}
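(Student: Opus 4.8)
The plan is to pass from the Fr\'echet functional in~\eqref{eq:barydef} to the multidimensional assignment form~\eqref{eq:mdagengen} in two moves: first bounding the number of points that an optimal barycenter needs to carry, and then exchanging the order of the minimizations over centers and over matchings.

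First I would reformulate the objective. Fix $n \ge \max\{\tn, n_j;\, 1\le j\le k\}$ and take any $\zeta \in \mfnfin$ with $\abs{\zeta}\le n$; pad $\zeta$ with $\aleph$-points to a tuple $(z_1,\dots,z_n)\in(\mcx')^n$ and pad each $\txi_j$ to $n$ points as in the statement. Since adding or deleting $\aleph$-points changes no $\tau$-distance, Theorem~\ref{thm:ttassignment} gives $\tau(\xi_j,\zeta)^p = \min_{\pi_j\in S_n}\sum_{i=1}^n d'(x_{\pi_j(i),j},z_i)^p$ for each $j$, whence
\[ \sum_{j=1}^k \tau(\xi_j,\zeta)^p = \sum_{j=1}^k \min_{\pi_j\in S_n}\sum_{i=1}^n d'(x_{\pi_j(i),j},z_i)^p, \]
and minimizing the left-hand side over all $\zeta$ with $\abs{\zeta}\le n$ coincides with minimizing the right-hand side over $(z_1,\dots,z_n)\in(\mcx')^n$ (different paddings of the same $\zeta$ give the same value).

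The key reduction is to show that some barycenter has at most $\tn$ points, so that the constraint $\abs{\zeta}\le n$ is harmless. I would argue this by a point-removal estimate. Fix any $\zeta$ with $\abs{\zeta}=N$ together with optimal matchings to the $\xi_j$, and for a point $z$ of $\zeta$ let $a$ be the number of indices $j$ for which $z$ is matched to a genuine point of $\xi_j$ rather than to $\aleph$. Deleting $z$ and rematching each of those $a$ genuine points to $\aleph$ changes the total cost by at most $(2a-k)C^p - \sum_{j} d'(x_{\cdot,j},z)^p$, the sum running over the $a$ genuine matches; this is $\le 0$ whenever $a\le k/2$. As each genuine point of each $\xi_j$ is matched to at most one point of $\zeta$, the numbers $a$ over the points of $\zeta$ sum to at most $\sum_j n_j$; hence if $N>\tn$ then $N>\frac{2}{k+1}\sum_j n_j$, so the average of these $a$'s is below $(k+1)/2$ and some point has $a\le k/2$ (checking both parities of $k$). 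Removing that point does not increase the functional, and iterating terminates in a configuration with at most $\tn\le n$ points and no larger cost, so the infimum over $\{\abs{\zeta}\le\tn\}$ equals the infimum over all of $\mfnfin$.

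Finally I would exchange the two minimizations, generalizing Lemma~16 of~\cite{KolianderEtAl2018}: for the attained case one verifies both inequalities of
\[ \min_{(z_i)\in(\mcx')^n}\ \sum_{j=1}^k \min_{\pi_j\in S_n}\sum_{i=1}^n d'(x_{\pi_j(i),j},z_i)^p = \min_{\pi_1,\dots,\pi_k\in S_n}\ \sum_{i=1}^n \min_{z\in\mcx'}\sum_{j=1}^k d'(x_{\pi_j(i),j},z)^p, \]
the direction ``$\ge$'' by inserting the left side's optimal centers and matchings into the right side, and ``$\le$'' by using, for the optimal $\pi_{*,1},\dots,\pi_{*,k}$ on the right, the centers $z_i\in\argmin_{z\in\mcx'}\sum_{j=1}^k d'(x_{\pi_{*,j}(i),j},z)^p$. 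The right-hand side is exactly~\eqref{eq:mdagengen}, these minimizing centers are the $z_i$ of the statement, and since dropping the $\aleph$-valued $z_i$ leaves every $\tau$-distance unchanged, $\zeta_*\vert_\mcx$ attains the minimum in~\eqref{eq:barydef} and is a $p$-th order barycenter. I expect the cardinality bound to be the main obstacle, in particular pinning down the constant $\tfrac{2}{k+1}$ and verifying the removal estimate for both parities of $k$, whereas the min-exchange identity and the reformulation via Theorem~\ref{thm:ttassignment} are routine.
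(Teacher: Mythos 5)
Your proof is correct and follows essentially the same route as the paper: a cardinality bound of $\tn$ obtained by observing that deleting any center point matched to at most $k/2$ genuine data points cannot increase the cost, followed by the padding reformulation via Theorem~\ref{thm:ttassignment} and the exchange of the minimizations over centers and permutations. Your explicit removal estimate $(2a-k)C^p - \sum_j d'(\cdot,z)^p \le 0$ and the averaging/parity argument merely spell out, somewhat more carefully than the paper does, why one may restrict to configurations with at most $\tn$ points.
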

The $\pi_{*,1},\ldots,\pi_{*,k} \in S_{n}$ above define $n$ disjoint ``clusters'' $\mcc_i = \{x_{\pi_{*,j}(i),j};\, 1 \leq j \leq k\}$, where each contains exactly one (maybe virtual) point of each point pattern. The minimization of \eqref{eq:mdagengen} may thus be interpreted as a multidimensional assignment problem with cluster cost
\begin{equation}
  \label{eq:mdaclustcost}
  \mathrm{cost}_*(\mcc) = \min_{z \in \mcx'} \sum_{x \in \mcc} d'(x,z)^p.
\end{equation}
\begin{proof}
Let us first give an upper bound on the cardinality of the barycenter. A single barycenter point can be matched with up to $k$ points (one from each point pattern). If said point is matched with only $\frac{k}{2}$ points or fewer, it cannot be worse to delete it. The contribution for this point in the objective function is at least $\frac{k}{2} C$, while deleting it adds at most $\frac{k}{2} C$ to the objective function. 

So every barycenter point should be matched with at least $\lceil \frac{k+1}{2} \rceil$ points. The total number of points is $\sum_{j=1}^{k} n_j$. Therefore the number of barycenter points is bounded above by $\tilde{n} = \bigl\lfloor \frac{2}{k+1} \sum_{j=1}^{k} n_j \bigr\rfloor$.


It is thus sufficient to fill up all the point patterns $\xi_j$ to $n$ points and work also with an ansatz of $n$ points for $\zeta$. Theorem~\ref{thm:ttassignment} yields
    \begin{equation} \label{eq:mdaexchange}
      \begin{split}
      \min_{\zeta \in \mfnfin} \sum_{j=1}^k \tau(\xi_j,\zeta)^p 
      &= \min_{z_1,\ldots,z_{n} \in \mcx'} \sum_{j=1}^k \min_{\pi \in S_{n}} \sum_{i=1}^{n} d'(x_{\pi(i),j},z_i)^p \\
      &= \min_{z_1,\ldots,z_n \in \mcx'} \min_{\pi_1,\ldots,\pi_k \in S_{n}} \sum_{j=1}^k \sum_{i=1}^{n} d'(x_{\pi_j(i),j},z_i)^p \\
      &= \min_{\pi_1,\ldots,\pi_k \in S_{n}} \sum_{i=1}^{n} \min_{z_i \in \mcx'} \sum_{j=1}^k d'(x_{\pi_j(i),j},z_i)^p
  \end{split}
\end{equation}
and that any minimizer $\zeta_*\vert_{\mcx}=\sum_{i=1}^{n} \delta_{z_i}\vert_{\mcx}$ on the left hand side is obtained from jointly minimizing in $\pi_1,\ldots,\pi_k$ and $z_1,\ldots,z_{n}$ on the right hand side.
\end{proof}

\section{Alternating clustering algorithms}
\label{sec:algorithms}

\SetKwData{it}{it}\SetKwData{N}{N}\SetKwData{pplist}{pplist}\SetKwData{weight}{weight}\SetKwData{cost}{cost}\SetKwData{costold}{costold}\SetKwData{costnew}{costnew}\SetKwData{dist}{dist}\SetKwData{center}{center}\SetKwData{dist2}{dist2}\SetKwData{centertwo}{center2}\SetKwData{perm}{perm}\SetKwData{perminfo}{perminfo}\SetKwData{pena}{penalty}\SetKwData{centerin}{centerIn}\SetKwData{centerout}{centerOut}\SetKwData{centernew}{centernew}\SetKwData{codexi}{xi}\SetKwData{codeeta}{eta}\SetKwData{true}{true}\SetKwData{false}{false}\SetKwData{happycluster}{happycluster}\SetKwData{happypoints}{happypoints}\SetKwData{newhappypoints}{newhappypoints}\SetKwData{cluster}{cluster}\SetKwData{khappy}{khappy}\SetKwData{happycost}{happycost}\SetKwData{supply}{supply}\SetKwData{nexist}{nexist}\SetKwData{kexistnew}{kexistnew}\SetKwData{whichexist}{whichexist}\SetKwData{zproposal}{zproposal}\SetKwData{zproposaltwo}{zproposal2}\SetKwData{newcluster}{newcluster}\SetKwData{newperm}{newperm}\SetKwData{newcost}{newcost}\SetKwData{noncenters}{noncenters}\SetKwData{miserablepoints}{miserablepoints}\SetKwData{alephindex}{alephindex}\SetKwData{nda}{nda}\SetKwData{epsvec}{epsvec}

\SetKwFunction{kassignment}{kassignment}\SetKwFunction{sumppdist}{sumppdist}\SetKwFunction{centerFromPerm}{centerFromPerm}\SetKwFunction{plotAssignment}{plotAssignment}\SetKwFunction{print}{print}
\SetKwFunction{optimPerm}{optimPerm}\SetKwFunction{optimClusterCenter}{optimClusterCenter}\SetKwFunction{optimClusterCenterEuclid2}{optimClusterCenterEuclid2}\SetKwFunction{optimClusterCenterNetwork1}{optimClusterCenterNetwork1}\SetKwFunction{optimBary}{optimBary}\SetKwFunction{optimCenter}{optimCenter}\SetKwFunction{optimDelete}{optimDelete}\SetKwFunction{optimAdd}{optimAdd}\SetKwFunction{warning}{warning}\SetKwFunction{kmeansbary}{kMeansBary}
\SetKwFunction{auctionmatch}{auctionMatch}\SetKwFunction{computecost}{computeCost}\SetKwFunction{matrix}{matrix}\SetKwFunction{getHappyCluster}{getHappyCluster}\SetKwFunction{codexists}{exists}\SetKwFunction{append}{append}\SetKwFunction{mean}{mean}
\SetKwFunction{size}{size}\SetKwFunction{crossdistp}{crossdistp}\SetKwFunction{codesum}{sum}\SetKwFunction{getAllMiserablePoints}{getAllMiserablePoints}\SetKwFunction{getBestCluster}{getBestCluster}\SetKwFunction{sample}{sample}\SetKwFunction{optimizeCluster}{optimizeCluster}

\SetKwInOut{input}{Input}\SetKwInOut{output}{Output}
\SetKw{InSet}{in}

Based on Proposition~\ref{prop:mdagengen} we propose an algorithm that alternates between minimizing
\begin{equation} \label{eq:objective}
  \sum_{j=1}^k \sum_{i=1}^{n} d'(x_{\pi_j(i),j},z_i)^p
\end{equation}
in $\pi_1,\ldots,\pi_k \in S_{n}$ and in $z_1,\ldots,z_{n} \in \mcx'$ until convergence. Such an algorithm terminates in a local minimum of~\eqref{eq:objective}
after a finite number of steps, because \eqref{eq:objective} can never increase and the minimization in the permutations is over a finite space.

Since this underlying idea is similar to the popular $k$-means clustering algorithm, we named the main function in the pseudocode and in the actual implementation \texttt{kMeansBary} (note, however, that $n$ plays the role of $k$ in our notation). A similar algorithm in the quite different setting of trimmed Wasserstein-2 barycenters for finitely supported probability measures is proposed in \cite{delBarrioEtAl2019}.

In what follows we present pseudocode along with the underlying ideas and explanations for two versions of the \texttt{kMeansBary}-algorithm that we dub \emph{original} and \emph{improved}. Here ``improved'' refers to the fact that we cut down on certain computation steps in order to save runtime. We will see in Section~\ref{sec:simulstudy} that this comes essentially without any performance loss.

User-friendly implementations of both algorithms will be made publicly available in an {\sf R}-package.

\subsection{Our original \texttt{kMeansBary} algorithm}
\label{ssec:original}

The pseudocode for the basic alternating strategy described above is given in Algorithm~\ref{algo:kmeansbary}. We have introduced a stopping parameter $\delta$ to allow termination before the local optimum is reached. Since we are not interested in the actual clustering, but only in the position of the centers $z_1,\ldots,z_{n}$, it seems very unlikely (though possible) that the solution changes substantially once the cost decrease has become very small. What is more, such a change might be spurious due to rounding errors in the data or when we use an approximation method for optimizing in the centers. 
Note also that we can always set $\delta$ to the smallest representable positive floating-point number to ensure convergence to the local optimum.

\IncMargin{1em}
\begin{algorithm}[h]
  \caption{\FuncSty{kMeansBary}. Dependence on data \protect\pplist suppressed for simplicity.
}\label{algo:kmeansbary}  
\input{$\center$ an initial pseudo-barycenter;\\
       $\pplist$ the list of data point patterns;\\
       $\delta > 0$ a constant for the termination criterion;\\
       $N$ the maximum number of iterations.}   
\output{Locally optimal pseudo-barycenter $\center$.}
\BlankLine
$\perm, \cost \leftarrow \optimPerm(\center)$\;
\For{$\it \leftarrow 1$ \KwTo $N$}{
  $\costold \leftarrow \cost$\;
  $\center \leftarrow \optimBary(\perm, \center)$\;
  $\center \leftarrow \optimDelete(\perm, \center)$\;
  $\center \leftarrow \optimAdd(\perm, \center)$\;
  $\perm, \cost \leftarrow \optimPerm(\center)$\;
  \lIf(\tcp*[f]{difference always nonnegative}){$\costold-\cost < \delta$}{
    \textbf{break}}
}
\KwRet{$\center$}\tcp*{and warn if the loop has run out}
\end{algorithm}\DecMargin{1em}

The minimization with respect to $\pi_1,\ldots,\pi_k$ is performed by \optimPerm.
This function computes an optimal matching between the current \center and each data point pattern in \pplist, using an alternating version of the auction algorithm with $\eps$-scaling; see Remark~\ref{rem:computation} and \cite{Bertsekas1988} for more details. We output the \cost of the current matching and an $n \times k$ matrix \perm, whose $j$-th column specifies the order in which the points of the $j$-th data pattern are matched to $z_1,\ldots,z_{n}$. For greater efficiency we save auxiliary information (price and profit vectors) and use it for initializing the auction algorithm when calling it again with the same data point pattern. 

For practical purposes we have split up the minimization with respect to $z_1,\ldots,z_{n} \in \mcx'$ into a function \optimBary that optimizes the positions within $\mcx$ and functions \optimDelete and \optimAdd that optimize which of the $z_i$ to move from $\mcx$ to $\aleph$ and from $\aleph$ to $\mcx$, respectively. We discuss details of these functions below.

In addition to the outputs of the various functions shown in Algorithm~\ref{algo:kmeansbary}, we also keep information on the quality of each match of points up to date. We call the match of a $z_i$ with a data point $x_{i'j}$ 
\begin{equation*}
  \begin{split}
    \text{\emph{happy}} &\text{ \,if $z_i, x_{i'j} \in \mcx$ and $d'(z_i,x_{i'j}) < 2^{1/p} C$}\\
    \text{\emph{miserable}} &\text{ \,if $z_i, x_{i'j} \in \mcx$ and $d'(z_i,x_{i'}) = 2^{1/p} C$ or if $z_i = \aleph, x_{i'j} \in \mcx$}\\
    \text{\emph{to $\aleph$}} &\text{ \,if $x_{i'j} = \aleph$}.
   \end{split}
\end{equation*}
Note that a miserable match is worst possible in the sense that $\mathrm{cost}(\mcc_i) = \sum_{x \in \mcc_i} d'(x,z_i)^p$ for center $z_i$ cannot increase if $x_{i'j}$ is replaced by \emph{any} other $x \in \mcx'$.

\subsubsection*{\optimBary}

The purpose of this function is to find for each $z_i \in \mcx$ (i.e.~not currently at $\aleph$) a location in $\mcx$ that minimizes $\mathrm{cost}(\mcc_i)$ for its current cluster $\mcc_i = \{x_{\pi_{j}(i),j};\, 1 \leq j \leq k\}$. This amounts to a more traditional location problem in $\mcx$, except that it is typically made (much) more difficult by the fact that we have to truncate distances at $2^{1/p} C$.

Note that any cluster points at $\aleph$ can be ignored because they always contribute the same amount to the cluster cost, no matter where the center lies. The same is true for individual points that have a much larger distance than $2^{1/p} C$ from the bulk of the points. However, there are countless scenarios with (groups of) points being around distance $2^{1/p} C$ apart from one another for which the optimization of the cluster cost becomes a difficult optimization problem (piecewise smooth on a space that is fragmented in complicated ways).

As a simple heuristic that works well in cases where we do not have to cut too many distances (i.e.\ $C$ is not too small), we suggest to ignore all points that are at the maximal $d'$-distance $2^{1/p} C$ from the current $z_i$ when computing the new $z_i$. Note that in this way the cluster cost can never increase.

Algorithm~\ref{algo:optimbary} gives corresponding pseudocode. The function \optimClusterCenter handles the location problem for the untruncated metric $d$ on $\mcx$. If for example $\mcx = \RD$ equipped with the Euclidean metric and $p=2$, Equation~\eqref{eq:frechet2} implies that \optimClusterCenter simply has to take the (coordinatewise) average of all happy points. The case $p=1$ can be tackled with higher computational effort by approximation via the popular Weiszfeld algorithm; see~\cite{Weiszfeld1937}. 

\IncMargin{1em}
\begin{algorithm}[h]
  \caption{\FuncSty{optimBary}: find optimal center pattern \emph{within $\mcx$} for given clusters $\mcc_i$.\hspace*{-12mm}
}\label{algo:optimbary}  
\For{$i \leftarrow 1$ \KwTo $n$}{
  \If(\tcp*[f]{$z_i$ is the current center of the $i$th cluster $\mcc_i$}){$z_i \in \mcx$}{   
    $\happypoints \leftarrow \{\text{points in $\mcc_i$ that are happily matched to $z_i$}\}$\;
    \If{$\happypoints\ !\!= \emptyset$}{
      $z_i \leftarrow \optimClusterCenter(\happypoints)$\;
    }(\tcp*[f]{otherwise $z_i$ is deleted in next call to \optimDelete})
}
}
\KwRet{$\{ z_1, \ldots, z_{n} \}$}\;
\end{algorithm}\DecMargin{1em}

As a further instance, which we will take up in Section~\ref{sec:applications}, we consider the situation where $\mcx$ is a simple graph $(V,E)$ equipped with the shortest-path distance and $p=1$. It can be shown that in this case the location problem in $\mcx$ is solved by an element $z_i$ of $V \cup \mcc_i$, i.e.\ either a vertex of the graph or any data point, see \cite{Hakimi1964}. We therefore proceed by first computing the distance matrix between all these points, which is then used for the entire algorithm. Such shortest-path distance computations in sparse graphs with thousands of points can be performed in (at most) a few seconds by various algorithms, see Chapter~25 in \cite{CormenetAl2009} and the concrete timing in Subsection~\ref{ssec:valencia}. It is now easy to implement the function \optimClusterCenter. For a given set of happy points of a cluster $\mcc_i$, pick the corresponding columns in the distance matrix, add them up and determine the minimal entry of the resulting vector. If there are several such entries, which due to choosing $p=1$ can happen quite frequently, we pick one among them uniformly at random. The index of the obtained entry identifies the center point $z_i$.

Note that precomputing the distance matrix between all points in $V \cup \mcc_i$ has the additional advantage that no distances have to be computed in the {\optimPerm} step. 

In situations with very large graphs and/or data sets, computing the entire distance matrix may not be feasible. In this case there are various fast heuristics available, such as the single and multi-hub heuristics proposed (in principle) in \cite{BandeltEtAl1994} and \cite{KolianderEtAl2018}.

\subsubsection*{\optimDelete}

This function deletes (i.e.\ moves to $\aleph$) any $z_i \in \mcx$ for which this operation decreases $\mathrm{cost}(\mcc_i)$. 

We denote by $\khap$, $\kmis$ and $\kal$ the numbers of data points in $\mcc_i$ that are happy, miserable and at $\aleph$, respectively. Write furthermore $c_{\mathrm{happy}}$ for the total cost of matching the happy points to $z_i$. If $z_i$ stays in $\mcx$ the cluster incurs an overall total cost of
\begin{equation*}
  c_{\mathrm{happy}} + \kmis \cdot 2C^p + \kal \cdot C^p
 \end{equation*}
 as opposed to
\begin{equation*}
  \khap \cdot C^p + \kmis \cdot C^p
 \end{equation*}
 if we delete $z_i$. Subtracting $\kmis \cdot C^p$ from both expressions this leads to the deletion condition
 \begin{equation*}
  \khap C^p < c_{\mathrm{happy}} + (k-\khap) C^p.
  \end{equation*}
  Since $c_{\mathrm{happy}} \geq 0$, a sufficient condition for deletion is $2 \khap < k$. We use this as a quick pre-test, which allows us to avoid computing $c_{\mathrm{happy}}$ sometimes.
The full deletion procedure is presented in Algorithm~\ref{algo:optimdelete}.

\IncMargin{1em}
\begin{algorithm}[h]
\caption{\FuncSty{optimDelete}: move center points from $\mcx$ to $\aleph$ if it decreases cost.}
\label{algo:optimdelete}  
\For{$i \leftarrow 1$ \KwTo $n$}{
  \If{$z_i \in \mfn$}{  
    $\happypoints \leftarrow \{\text{points in $\mcc_i$ that are happily matched to $z_i$}\}$\;    
    $\khap \leftarrow \# \happypoints$\;
  \eIf{$2 * \khap < k$}{
    $z_i \leftarrow \aleph$\tcp*{shortcut deletion}
  }{
    $c_{\mathrm{happy}} \leftarrow \sum_{x \in \happypoints} d'(x,z_i)^p$\;
    \lIf{$\khap * C^p < c_{\mathrm{happy}} + (k - \khap) * C^p$}{
      $z_i \leftarrow \aleph$}
 }
}
}
\KwRet{$\{ z_1, \ldots, z_n \}$}\;
\end{algorithm}\DecMargin{1em}

\subsubsection*{\optimAdd}

This function adds (i.e.\ moves to $\mcx$) any $z_i \in \aleph$ for which it finds a way to do so that decreases $\mathrm{cost}(\mcc_i)$. Pseudocode is given in Algorithm~\ref{algo:optimadd}.

As a compromise between computational simplicity and finding a good location in $\mcx$, we first sample a proposal location $\tz$ uniformly from all miserable data points (i.e.~points from \emph{any} cluster that are currently in a miserable match with their center). Before we consider moving $z_i$ to $\tz$, we rebuild the cluster $\mcc_i$ in such a way that this move has a better chance of being accepted.

The corresponding procedure is performed by the {\optimizeCluster}-function in the pseudocode:
For each data pattern $\xi_j$ pick the miserable point that is closest to $\tz$ (if it has any) and exchange it with the corresponding point $x_{\pi_j(i),j}$ that is currently in $\mcc_i$.
Since the point coming from the other cluster was miserable before, the cost of that cluster cannot increase by this exchange. The cost of the cluster $\mcc_i$ can increase only if it loses a point located at $\aleph$ in the exchange. In this case the cost increases by $C^p$, which is compensated by the fact that the cost of the other cluster must decrease, either from $2 C^p$ to $C^p$ if its center is in $\mcx$, or from $C^p$ to $0$ if its center is at $\aleph$. Thus the total cost remains the same, but $\mcc_i$ has an additional point in $\mcx$ now, which makes the successful addition of $z_i$ to $\mcx$ more likely. 
%

To further decrease the prospective cluster cost after addition, we update the proposal $\tz$ by recentering it in its new cluster using the appropriate $\optimClusterCenter$-function introduced in $\optimBary$ (applied to the set of points of the new cluster that are in a happy match with $\tz$).

Finally, check whether the cost of the new cluster based on the updated $\tz$ is smaller than the same cost based on $z_i = \aleph$, which is $C^p$ times the number $k_{\mcx}$ of non-$\aleph$ points in the new cluster. Set $z_i$ to $\tz$ if this is the case. 


\IncMargin{1em}
\begin{algorithm}[h]
  \caption{\protect\optimAdd: move center points from $\aleph$ to $\mcx$ if it decreases cost.
}\label{algo:optimadd}  
$\alephindex \leftarrow \{i \in [n];\, z_i = \aleph \}$\;
\If{$\alephindex !\!= \emptyset$}{
  $\supply \leftarrow \{\text{data points that are miserably matched to some $z_i$}\}$\;
  \For{$i$ \InSet \alephindex}{
    \lIf{$\supply = \emptyset$}{\textbf{break}}
    $\tz \leftarrow \sample(\supply,1)$\tcp*{draw uniformly at random from \supply}
    $\supply \leftarrow \supply \setminus \{\tz\}$\;
    $\newcluster, \newperm \leftarrow \optimizeCluster(\tz,\perm, i)$\;
    $\newhappypoints \leftarrow \{\text{points in \newcluster with a happy new match to $\tz$}\}$\;
    \lIf{$\newhappypoints\ !\!= \emptyset$}{
      $\tz \leftarrow \optimClusterCenter(\newhappypoints)$}
    $k_{\mcx} \leftarrow \#\{x \in \newcluster;\, x \in \mcx \}$\;
    $c_{\mathrm{new}} \leftarrow \sum_{x \in \newcluster} d'(x,\tz)^p$\;
    \If{$c_{\mathrm{new}} < k_{\mcx} * C^p$}{
      $z_i \leftarrow \tz$\;
      $\perm \leftarrow \newperm$\;
      $\supply \leftarrow \supply \setminus \{x \in \supply;\, x \text{ is happy}\}$\;
    }
  }
}
\KwRet{$\{ z_1, \ldots, z_n \}$}\;
\end{algorithm}\DecMargin{1em}

\subsection{An improved \texttt{kMeansBary} algorithm}
\label{ssec:improved}


For obtaining an algorithm with a reduced computational cost, we cut down on steps that are costly, but are not expected to influence the resulting local optimum in a decisive way. Since for now we treat the location problem at the cluster level (performed by \optimBary) as very general, allowing a wide range of metric spaces $(\mcx,d)$, we focus here on saving computations in the functions \optimPerm, \optimDelete, and \optimAdd. 

We have realized that by far the most additions and deletions of points take place in the first two iterations of the original algorithm (see also Figure~\ref{fig:evolution} below). Especially checking for addition of points is costly and after the first few iterations very rarely successful. Therefore we limit such checking henceforward to the first $N_{\mathrm{del/add}} = 5$ iteration steps. Some further heuristics could be applied in $\optimAdd$, but the gain in computation time is not so large and they can significantly change the outcome, which is why we decided against implementing them.

In $\optimPerm$ we cannot avoid doing matchings. However, the auction algorithm we use allows to solve a relaxation of the problem by stopping the $\eps$-scaling method early. In general, the auction algorithm with $\eps$-scaling based on a decreasing sequence $(\eps_1,\ldots,\eps_l)$ returns successively improved solutions that are guaranteed to lie within $n\hbit\eps_i$ of the optimal total cost after the $i$-th step, see~\citet[Proposition~1]{Bertsekas1988}. By representing rescaled distances as integers in $\{0,1,\ldots,10^9\}$, an optimal matching is obtained in the $l$-th step if $\eps_l < 1/n$. Our improved algorithm is based on the same $\eps$-vector as the original algorithm, which has components $\eps_i = \frac{1}{n+1} 10^{l-i}$, $1 \leq i \leq l$, where $l$ is chosen in such a way that $10^7 \leq \eps_1 < 10^8$. As a first improvement, we use the subsequence $(\eps_{a_{\it}}, \eps_{a_{\it}+1}, \ldots,\eps_{b_{\it}})$, where $a$ and $b$ are prespecified vectors of indices $\in \{1,2,\ldots,l\}$. A simple choice for $a$ and $b$ that tends to decrease the runtime noticeably is $a_{\it} = 1$ and $b_{\it} = \min\{\it,l\}$. Pseudocode for this is presented in Algorithm~\ref{algo:kmeansbary2}.

In practice we settled for a somewhat more sophisticated improvement. We choose $a = (1,1,1,3,3,3,\ldots,3,4)$ and $b = (1,2,3,4,6,8,\ldots,2\lfloor\frac{l-1}{2}\rfloor,l)$, and we use the sequence $(\eps_{a_{j}}, \eps_{a_{j}+1}, \ldots,\eps_{b_{j}})$, where $j=\it$ for $\it \in \{1,2,3\}$, and then $j$ is increased by $1$ each time the algorithm would otherwise converge or if the cost increases (which can only happen as long as the matchings are not optimal).

This strategy was chosen after analyzing the calculations of the algorithm with respect to the time each calculation takes. In the first two to three iterations there are a lot of changes in the positions of the barycenter points. Especially in the first iteration many points are deleted and added, which completely changes the assignments. Therefore we have to begin the assignment calculation with $\eps_1$ and to get more sensible results we get more precise with each of the first three iterations. 
After three iterations there are usually no big changes to the barycenter anymore, so we can reuse the assignment from the iteration before as a sensible starting solution and can omit $\eps_1$ and $\eps_2$ in return. Leaving out the first entries of $\eps$ too soon increases the runtime. Every time the algorithm converges, but has no guaranteed optimal assignment (i.e.\ $b_j < l$), $j$ is increased by $1$, meaning that the next two entries of $\eps$ are used too, until the end of $\eps$ is reached. 
Then we can safely leave out the first three entries of $\eps$ without increasing the runtime, because at this point the assignments from one iteration to the next only change very little.

\IncMargin{1em}
\begin{algorithm}[h]
  \caption{\FuncSty{kMeansBary2}. A simple version of the improved \FuncSty{kMeansBary}-algorithm.  Dependence on data \protect\pplist suppressed for simplicity.
}\label{algo:kmeansbary2}  
\input{$\center$, $\pplist$, $\delta > 0$, $N$ are as in the original \FuncSty{kMeansBary}-algorithm;\\
       $N_{\mathrm{del/add}}$ number of iterations during which we perform delete/add steps.}
\output{Locally optimal pseudo-barycenter $\center$.}
\BlankLine
$l \leftarrow \bigl\lceil \log_{10}(10^8 / \frac{1}{n+1}) \bigr\rceil$\;
$\epsvec \leftarrow \bigl( \frac{1}{n+1} 10^{l-i} \bigr)_{1 \leq i \leq l}$\;
$\perm, \cost \leftarrow \optimPerm(\center,\epsvec)$\;
\For{$\it \leftarrow 1$ \KwTo $N$}{
  $\costold \leftarrow \cost$\;
  $\center \leftarrow \optimBary(\perm, \center)$\;
\If{$\it \leq N_{\mathrm{del/add}}$}{
  $\center \leftarrow \optimDelete(\perm, \center)$\;
  $\center \leftarrow \optimAdd(\perm, \center)$\;
}
\eIf{$\it < l$}{
	$\perm, \cost \leftarrow \optimPerm(\center,\epsvec[1:\it])$\;
}{
  $\perm, \cost \leftarrow \optimPerm(\center,\epsvec)$\;
  \lIf(\tcp*[f]{difference always nonnegative}){$\costold-\cost < \delta$}{
    \textbf{break}}
}
}
\KwRet{$\center$}\tcp*{and warn if the loop has run out}
\end{algorithm}\DecMargin{1em}

\subsection{Practical aspects}

As it turns out the upper bound $\tn$ on the cardinality of the barycenter from Proposition~\ref{prop:mdagengen} is often far too large in practice. For efficiency reasons we typically run the algorithm with a number $n \geq \max\{n_j;\, 1 \leq j \leq k\}$ that is much smaller than $\tn$. We generate a starting point pattern \center by picking $\frac{1}{k} \sum_{j=1}^k \abs{\xi_j}$ points uniformly at random from the underlying observation window. In a first step all point patterns are filled to $n$ points by adding points at $\aleph$. Then Algorithm~\ref{algo:kmeansbary} or~\ref{algo:kmeansbary2} is run.

Figure~\ref{fig:evolution} shows a typical run of Algorithm~\ref{algo:kmeansbary} in the case of an i.i.d.\ sample $\xi_1, \ldots, \xi_k$ of point patterns in $\RR^2$ generated from a similar distribution as studied in Section~\ref{sec:simulstudy}. We use Euclidean distance and $p=2$. The current barycenter is marked by blue points. Typically the random starting point pattern is not a good approximation to the resulting barycenter. Therefore many points are deleted in the first iteration. Many other ones are added at or moved to more cost efficient spots. Regardless of the starting pattern the algorithm typically attains a reasonably looking configuration after a single iteration. After that hardly any points are added or deleted any more. The algorithm mostly moves a few individual barycenter points around each time.
\begin{figure}[h]
  \includegraphics[scale=0.55]{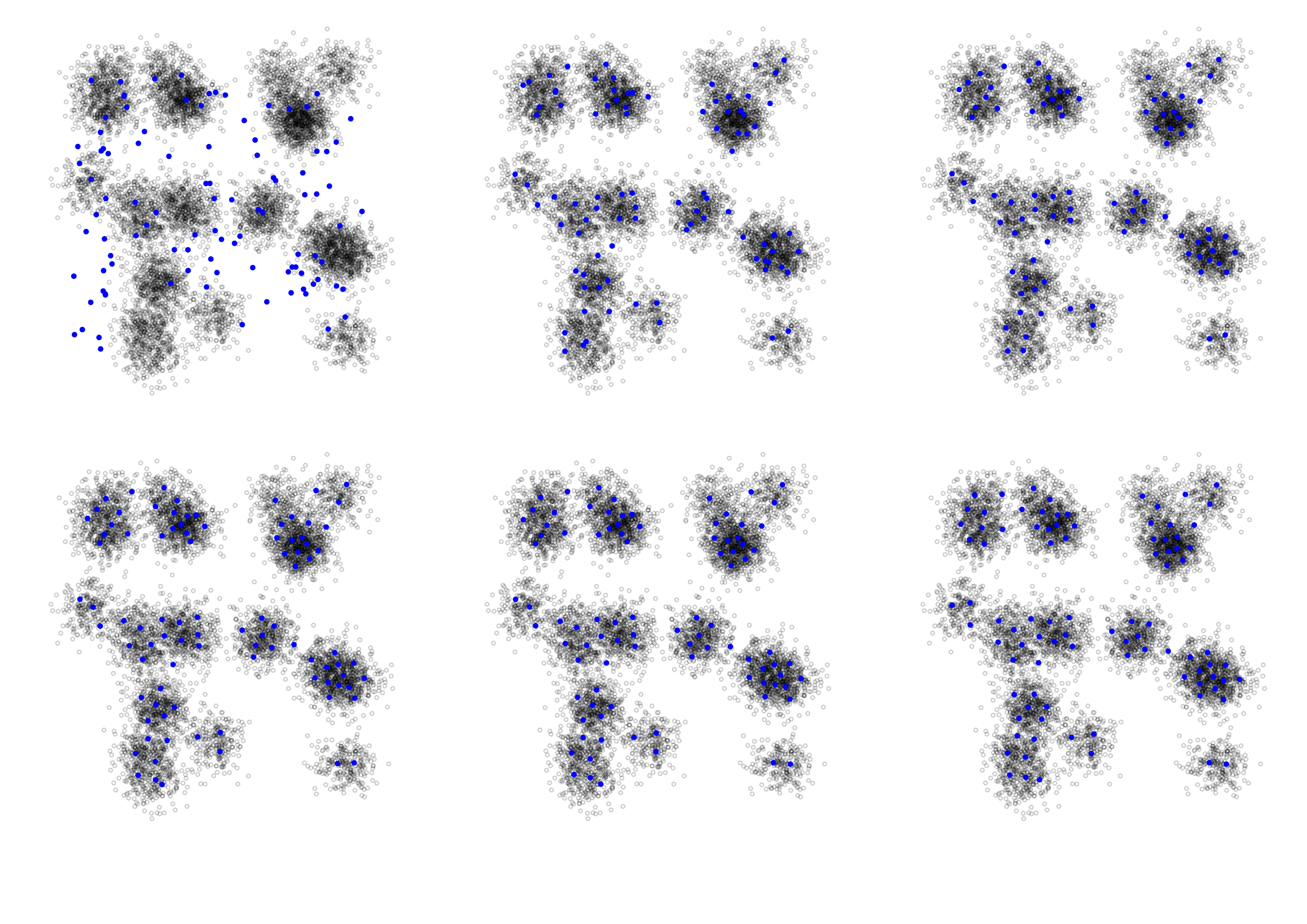}
    \vspace*{-14mm}

\caption{Stepwise evolution of the barycenter for $k=80$, $m_{\#} = 100$. In the first iteration $32$ points are deleted and $25$ added. After that only movements take place.}
\label{fig:evolution}
\end{figure}

\section{Simulation study}
\label{sec:simulstudy}

In this section we present a simulation study for evaluating the algorithms described in Section~\ref{sec:algorithms} for point patterns in $\RR^2$ using squared Euclidean cost. Unfortunately it is not feasible for larger data examples to compute the actual barycenter as a ground truth. Even when treating the point patterns as empirical measures and solving the simpler and much better studied problem of finding a barycenter in the space of probability measures (with respect to the Wasserstein metric $W_2$), the computation times for problems close to our smallest examples below range from minutes to hours; see \cite{AnderesEtAl2016} and \cite{BorgwardtPatterson2018}.
As a replacement of comparing with the actual barycenter, we assess the range of the final objective function values. In addition we evaluate the time performance of the default algorithm, and compare both objective function values and timings to the improved algorithm.

As problem instances we created sets of $k$ point patterns in $\RR^2$ having mean cardinality of $m_{\#}$ in each pattern. The cardinalities $n_j$, $j \in [k]$, of the individual point patterns were determined by one of the following methods:
\begin{tightenumerate}
  \item[(i)] by setting $n_j = m_{\#}$ \emph{(deterministic cardinality)}
  \item[(ii)] by sampling $n_j$ from a binomial distribution with mean $m_{\#}$ and variance $\approx 1$\\ \emph{(low-variance cardinality)}
  \item[(iii)] by sampling $n_j$ from a Poisson distribution with parameter $m_{\#}$\\  \emph{(high-variance cardinality)} 
  \end{tightenumerate}

The points were distributed according to a balanced mixture of $N \in \{5,10,15\}$ rotationally symmetric normal densities centered at fixed locations in $[0,1]^2$ and having standard deviation $\sigma \in \{0.05,0.1,0.2\}$. Figure~\ref{fig:centerscenarios} gives examples under the three center scenarios for $k=20$, deterministic cardinality $n_j = m_{\#} = 20$ and $\sigma=0.05$.
\begin{figure}[h]
  \hspace*{-8mm}\includegraphics[width=0.4\textwidth]{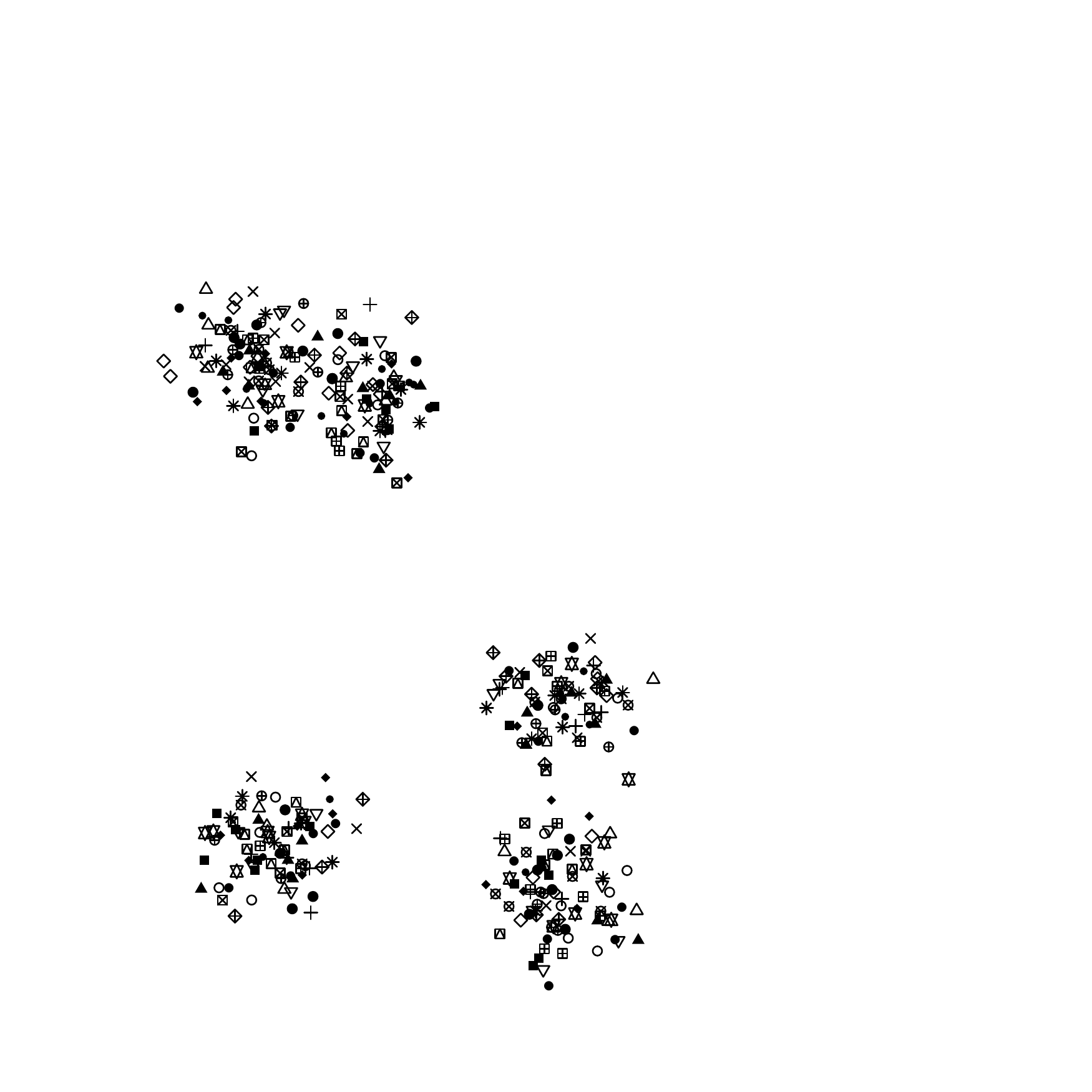}\hspace*{-18mm}
  \includegraphics[width=0.4\textwidth]{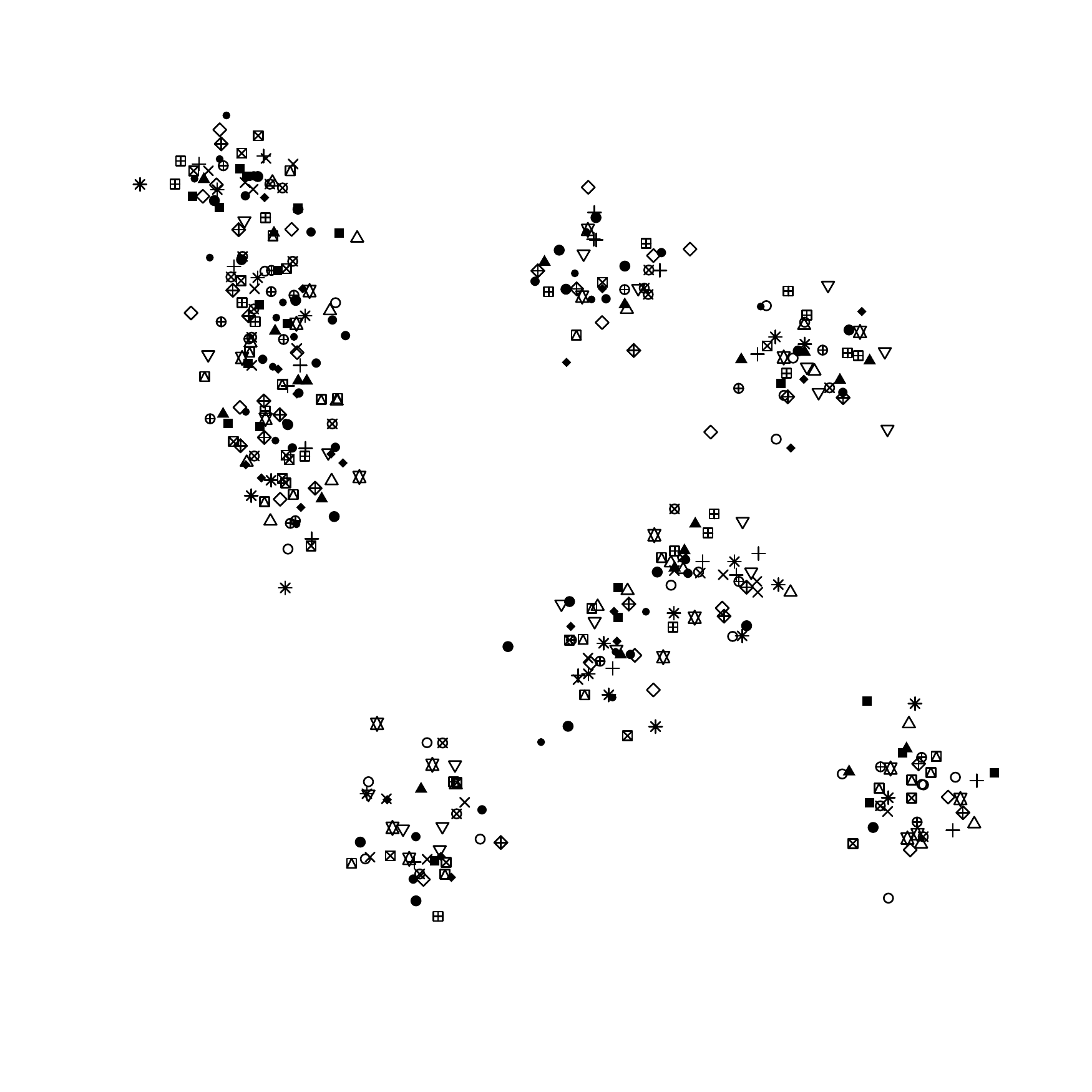}\hspace*{-4mm}
  \includegraphics[width=0.4\textwidth]{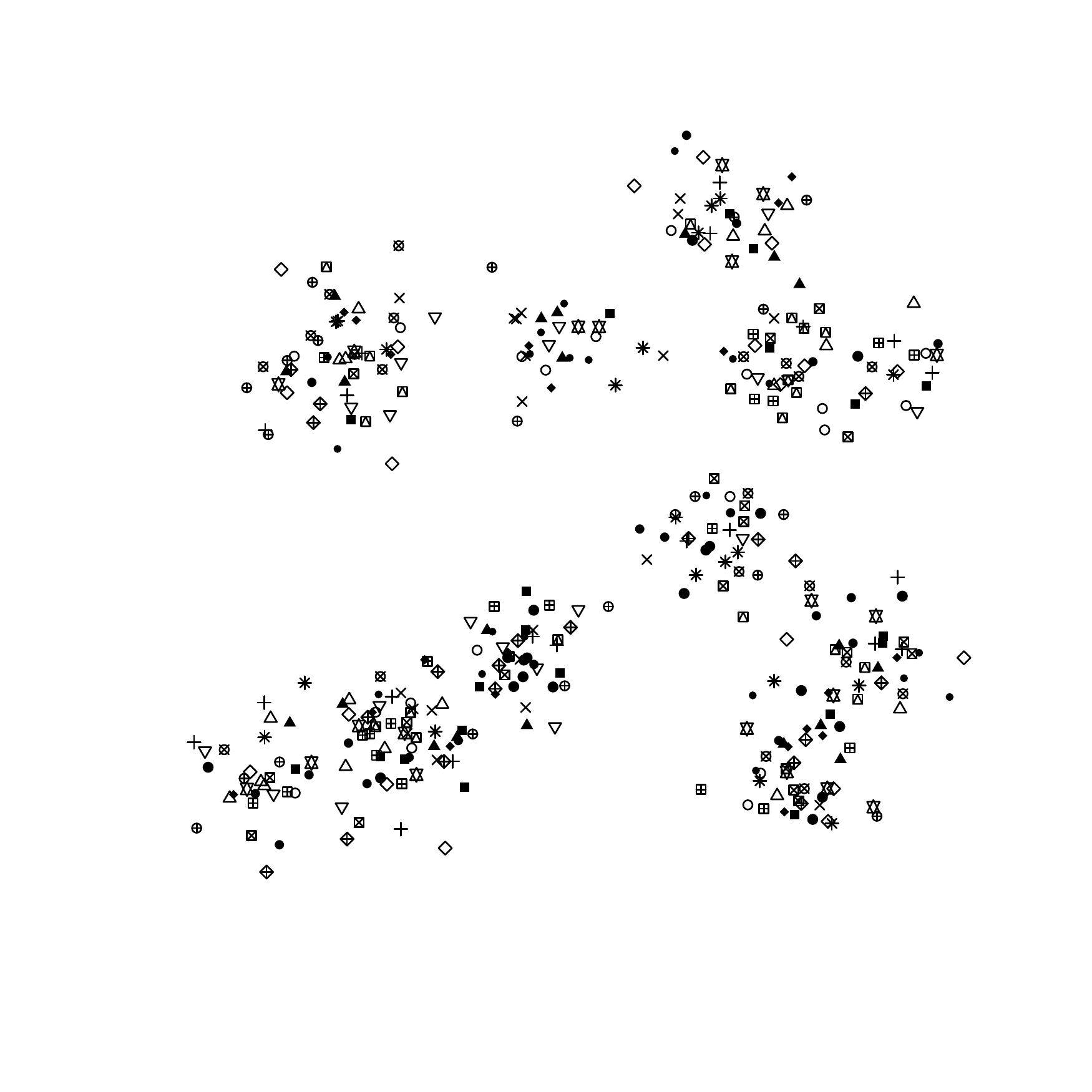}\hspace*{-4mm}\\[-12mm] \hspace*{2mm}
\caption{20 point patterns with 20 points each from the three different center scenarios $N=5, 10, 15$ for $\sigma=0.05$.}
\label{fig:centerscenarios} 
\end{figure}

We chose five $(k,m_{\#})$ pairs $(20,20), (20,50), (50,20), (50,50)$ and $(100,100)$, which in combination with $N$ varying in $\{5,10,15\}$, $\sigma$ in $\{0.05, 0.1, 0.2\}$ and the three cardinality distributions yield a total of $5 \times 3^3 = 135$ scenarios. We created $100$ instances for each scenario.

Our algorithms from Section~\ref{sec:algorithms} were run from ten starting solutions whose cardinalities matched the mean number of data points and whose points were sampled uniformly at random from $[0,1]^2$. In a pilot experiment this tended to give somewhat better local minima than starting from a random sample of all data points combined. The starting point patterns were independently chosen for each instance, but the same for both algorithms.

We first consider the original algorithm presented in Section~\ref{sec:algorithms}. 
Table~\ref{tab:distvanilla} gives the maximum relative deviation from the minimum $d_{\min}$ of the resulting objective function values among the ten starting solutions, i.e.\ $\frac{d_{\max}-d_{\min}}{d_{\min}}$. We can see that the maximal objective function value among the ten runs rarely exceeds the minimum value by more than 5\%. This percentage is rather higher for the deterministic and low-variance cardinalities and when clusters in the (unmarked) superposition of the point patterns are well separated (small $N$ and $\sigma$). This may well be explicable by the fact that typically many pairs can be matched over short distances in these situations such that wrong clustering decisions come typically at a higher relative cost. Figure~\ref{fig:Nsigma} supports this by showing that the total objective function values within each problem size are lower for well separated clusters.

A further smaller experiment following up on the scenarios that exhibited the poorest performance for ten starting patterns showed that the margin of 5\% increases to 8\% when basing the maximum relative deviation from the minimum on 100 starting patterns.

\begin{figure}[h]
  \includegraphics[height=0.37\linewidth, width=1.02\linewidth]{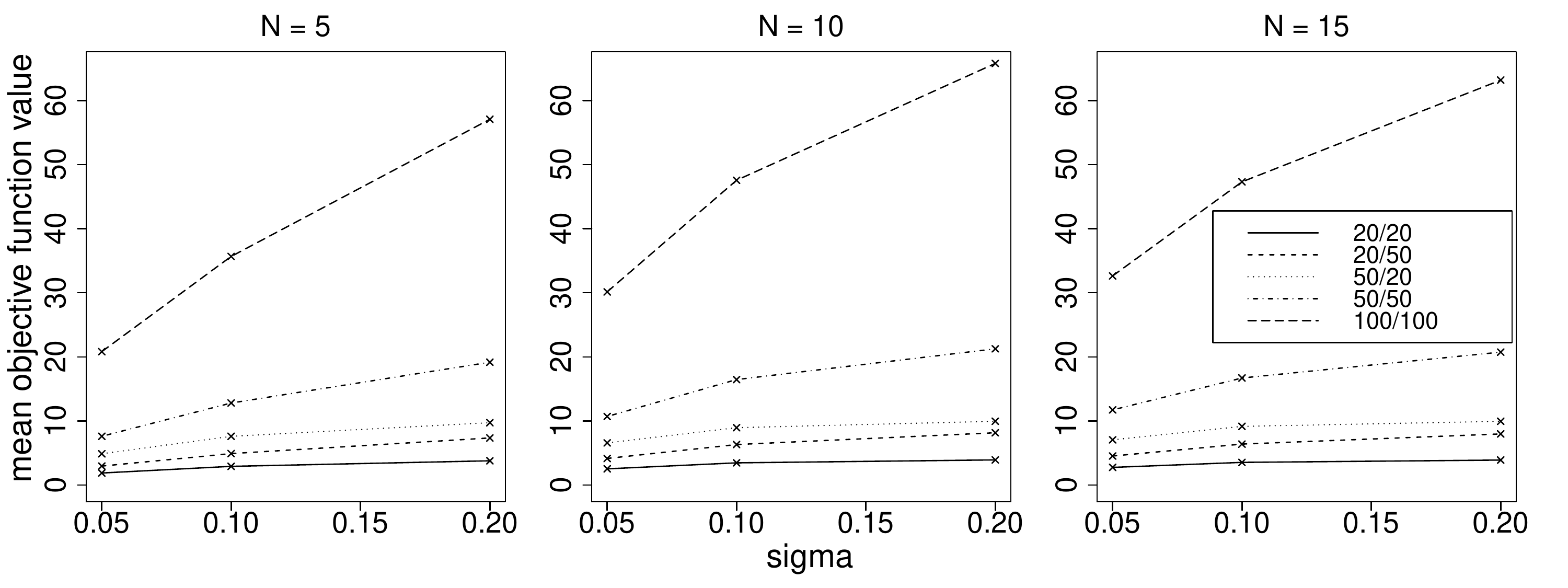}\hspace*{-7mm}
\vspace*{-4mm}
  
\caption{Mean objective function values over all instances as function of $\sigma$ for different $N$.}
\label{fig:Nsigma} 
\end{figure}

For the improved algorithm from Subsection~\ref{ssec:improved} we compute the maximum relative deviation of its objective function values from the minimum $d_{\min}$ of the corresponding values of the original algorithm, i.e.\  
$\frac{d^{*}_{\max}-d_{\min}}{d_{\min}}$, where $d^{*}_{\max}$ is the maximum of the objective function values of the improved algorithm. As seen in Table~\ref{tab:distpimpedvsvanilla} the performance is no worse than for the original algorithm in spite of the reduced amount of computations performed.

\begin{table}[h]
  \caption{\emph{Original algorithm.} Maximum relative deviation from the minimum objective function value among ten starting solutions. Means taken over 100 instances, with $5\%$- and $95\%$-quantile in parentheses. The first block of $9$ rows corresponds to the deterministic cardinality, the second block to the low-variance cardinality and the last block to the high-variance cardinality.}
  \label{tab:distvanilla}
  \vspace*{3mm}
  
\footnotesize
\centering
\begin{tabular}{|c|c|r|r|r|r|r|}
  \hline
 N & $\sigma$ & 20/20 & 20/50 & 50/20 & 50/50 & 100/100 \\ 
\hline
  & 0.05 & 0.05 (0.03, 0.07) & 0.04 (0.02, 0.06) & 0.04 (0.03, 0.06) & 0.04 (0.02, 0.05) & 0.03 (0.02, 0.05) \\ 
\cline{2-7}
5 & 0.1 & 0.04 (0.02, 0.06) & 0.03 (0.02, 0.05) & 0.03 (0.02, 0.04) & 0.03 (0.02, 0.04) & 0.03 (0.02, 0.05) \\ 
\cline{2-7}
 & 0.2 & 0.02 (0.01, 0.03) & 0.02 (0.01, 0.03) & 0.01 (0.01, 0.02) & 0.02 (0.01, 0.02) & 0.01 (0.01, 0.02) \\ 
\cline{1-7}
 & 0.05 & 0.04 (0.02, 0.06) & 0.04 (0.02, 0.06) & 0.03 (0.02, 0.05) & 0.03 (0.02, 0.05) & 0.03 (0.02, 0.05) \\ 
\cline{2-7}
10 & 0.1 & 0.02 (0.02, 0.04) & 0.03 (0.02, 0.04) & 0.02 (0.01, 0.03) & 0.02 (0.01, 0.03) & 0.02 (0.01, 0.03) \\ 
\cline{2-7}
 & 0.2 & 0.01 (0.00, 0.02) & 0.02 (0.01, 0.03) & 0.00 (0.00, 0.01) & 0.01 (0.01, 0.02) & 0.01 (0.01, 0.01) \\ 
\cline{1-7}
 & 0.05 & 0.03 (0.02, 0.05) & 0.03 (0.02, 0.05) & 0.03 (0.01, 0.04) & 0.03 (0.02, 0.04) & 0.03 (0.02, 0.04) \\ 
\cline{2-7}
15 & 0.1 & 0.02 (0.01, 0.04) & 0.03 (0.02, 0.04) & 0.02 (0.01, 0.02) & 0.02 (0.01, 0.03) & 0.02 (0.01, 0.02) \\ 
\cline{2-7}
 & 0.2 & 0.02 (0.01, 0.03) & 0.02 (0.01, 0.03) & 0.01 (0.00, 0.01) & 0.01 (0.01, 0.02) & 0.01 (0.01, 0.01) \\ 
\hline \hline
 & 0.05 & 0.05 (0.03, 0.07) & 0.04 (0.02, 0.05) & 0.04 (0.02, 0.06) & 0.04 (0.02, 0.05) & 0.03 (0.02, 0.05) \\ 
\cline{2-7}
5 & 0.1 & 0.04 (0.02, 0.05) & 0.03 (0.02, 0.05) & 0.03 (0.01, 0.04) & 0.03 (0.02, 0.04) & 0.03 (0.02, 0.05) \\ 
\cline{2-7}
 & 0.2 & 0.02 (0.01, 0.03) & 0.02 (0.01, 0.04) & 0.01 (0.01, 0.02) & 0.02 (0.01, 0.02) & 0.01 (0.01, 0.02) \\ 
\cline{1-7}
 & 0.05 & 0.03 (0.02, 0.06) & 0.04 (0.02, 0.06) & 0.03 (0.02, 0.04) & 0.03 (0.02, 0.05) & 0.04 (0.02, 0.05) \\ 
\cline{2-7}
10 & 0.1 & 0.02 (0.01, 0.04) & 0.03 (0.02, 0.04) & 0.02 (0.01, 0.03) & 0.02 (0.01, 0.03) & 0.02 (0.01, 0.03) \\ 
\cline{2-7}
 & 0.2 & 0.01 (0.01, 0.02) & 0.02 (0.01, 0.03) & 0.01 (0.00, 0.01) & 0.01 (0.01, 0.02) & 0.01 (0.01, 0.01) \\ 
\cline{1-7}
 & 0.05 & 0.03 (0.02, 0.05) & 0.03 (0.02, 0.05) & 0.02 (0.01, 0.04) & 0.03 (0.02, 0.04) & 0.03 (0.01, 0.04) \\ 
\cline{2-7}
15 & 0.1 & 0.02 (0.01, 0.04) & 0.03 (0.02, 0.04) & 0.01 (0.01, 0.02) & 0.02 (0.01, 0.03) & 0.02 (0.01, 0.02) \\ 
\cline{2-7}
 & 0.2 & 0.01 (0.01, 0.02) & 0.02 (0.01, 0.03) & 0.01 (0.00, 0.01) & 0.02 (0.01, 0.02) & 0.01 (0.01, 0.01) \\ 
\hline \hline
  & 0.05 & 0.03 (0.02, 0.05) & 0.02 (0.01, 0.04) & 0.03 (0.01, 0.04) & 0.02 (0.01, 0.03) & 0.01 (0.01, 0.02) \\ 
\cline{2-7}
5 & 0.1 & 0.03 (0.02, 0.05) & 0.03 (0.02, 0.04) & 0.02 (0.01, 0.04) & 0.02 (0.01, 0.03) & 0.02 (0.01, 0.03) \\ 
\cline{2-7}
 & 0.2 & 0.02 (0.01, 0.03) & 0.02 (0.01, 0.03) & 0.01 (0.01, 0.01) & 0.02 (0.01, 0.02) & 0.01 (0.01, 0.02) \\ 
\cline{1-7}
 & 0.05 & 0.03 (0.02, 0.04) & 0.03 (0.01, 0.04) & 0.02 (0.01, 0.04) & 0.02 (0.01, 0.04) & 0.02 (0.01, 0.03) \\ 
\cline{2-7}
10 & 0.1 & 0.02 (0.01, 0.04) & 0.02 (0.01, 0.03) & 0.01 (0.01, 0.02) & 0.02 (0.01, 0.03) & 0.01 (0.01, 0.02) \\ 
\cline{2-7}
 & 0.2 & 0.01 (0.00, 0.02) & 0.02 (0.01, 0.03) & 0.00 (0.00, 0.01) & 0.01 (0.01, 0.02) & 0.01 (0.01, 0.01) \\ 
\cline{1-7}
 & 0.05 & 0.03 (0.01, 0.04) & 0.03 (0.02, 0.04) & 0.02 (0.01, 0.04) & 0.02 (0.01, 0.03) & 0.01 (0.01, 0.02) \\ 
\cline{2-7}
15 & 0.1 & 0.02 (0.01, 0.03) & 0.03 (0.01, 0.04) & 0.01 (0.01, 0.02) & 0.02 (0.01, 0.03) & 0.01 (0.01, 0.02) \\ 
\cline{2-7}
 & 0.2 & 0.01 (0.00, 0.02) & 0.02 (0.01, 0.03) & 0.00 (0.00, 0.01) & 0.01 (0.01, 0.02) & 0.01 (0.01, 0.02) \\ 
\hline
\end{tabular}
\end{table}
\clearpage

\begin{table}
  \caption{\emph{Improved algorithm.} Maximum relative deviation from the minimum objective function value of the original algorithm (both based on the same ten starting solutions). Means over 100 instances, with $5\%$- and $95\%$-quantile in parentheses. The first block of $9$ rows corresponds to the deterministic cardinality, the second block to the low-variance cardinality and the last block to the high-variance cardinality.}
  \label{tab:distpimpedvsvanilla}
\vspace*{3mm}
  
  \footnotesize
\centering
\begin{tabular}{|c|c|r|r|r|r|r|}
  \hline
 N & $\sigma$ & 20/20 & 20/50 & 50/20 & 50/50 & 100/100 \\ 
\hline
 & 0.05 & 0.04 (0.02, 0.06) & 0.04 (0.02, 0.06) & 0.04 (0.02, 0.06) & 0.04 (0.02, 0.06) & 0.03 (0.02, 0.05) \\ 
\cline{2-7}
5 & 0.1 & 0.04 (0.02, 0.06) & 0.03 (0.02, 0.05) & 0.03 (0.02, 0.04) & 0.03 (0.02, 0.04) & 0.03 (0.02, 0.05) \\ 
\cline{2-7}
 & 0.2 & 0.02 (0.01, 0.03) & 0.02 (0.02, 0.03) & 0.01 (0.00, 0.02) & 0.02 (0.01, 0.02) & 0.01 (0.01, 0.02) \\ 
\cline{1-7}
 & 0.05 & 0.04 (0.02, 0.06) & 0.04 (0.02, 0.05) & 0.03 (0.02, 0.05) & 0.04 (0.02, 0.05) & 0.03 (0.02, 0.05) \\ 
\cline{2-7}
10 & 0.1 & 0.02 (0.01, 0.03) & 0.03 (0.02, 0.04) & 0.02 (0.01, 0.02) & 0.02 (0.01, 0.03) & 0.02 (0.01, 0.03) \\ 
\cline{2-7}
 & 0.2 & 0.01 (0.00, 0.02) & 0.02 (0.01, 0.03) & 0.00 (0.00, 0.01) & 0.02 (0.01, 0.02) & 0.01 (0.01, 0.01) \\ 
\cline{1-7}
 & 0.05 & 0.03 (0.02, 0.05) & 0.03 (0.02, 0.05) & 0.03 (0.01, 0.04) & 0.03 (0.02, 0.04) & 0.03 (0.02, 0.04) \\ 
\cline{2-7}
15 & 0.1 & 0.02 (0.01, 0.04) & 0.03 (0.02, 0.04) & 0.02 (0.01, 0.02) & 0.02 (0.01, 0.03) & 0.02 (0.01, 0.02) \\ 
\cline{2-7}
 & 0.2 & 0.01 (0.00, 0.02) & 0.02 (0.01, 0.03) & 0.00 (0.00, 0.01) & 0.02 (0.01, 0.02) & 0.01 (0.01, 0.01) \\
\hline \hline
 & 0.05 & 0.04 (0.02, 0.07) & 0.04 (0.02, 0.05) & 0.04 (0.02, 0.06) & 0.03 (0.02, 0.05) & 0.03 (0.02, 0.05) \\ 
\cline{2-7}
5 & 0.1 & 0.03 (0.02, 0.05) & 0.03 (0.02, 0.05) & 0.03 (0.01, 0.04) & 0.03 (0.02, 0.04) & 0.03 (0.02, 0.04) \\ 
\cline{2-7}
 & 0.2 & 0.02 (0.01, 0.03) & 0.03 (0.01, 0.04) & 0.01 (0.00, 0.02) & 0.02 (0.01, 0.03) & 0.01 (0.01, 0.02) \\ 
\cline{1-7}
 & 0.05 & 0.03 (0.02, 0.05) & 0.03 (0.02, 0.06) & 0.03 (0.01, 0.04) & 0.03 (0.02, 0.05) & 0.04 (0.02, 0.05) \\ 
\cline{2-7}
10 & 0.1 & 0.02 (0.01, 0.04) & 0.03 (0.01, 0.04) & 0.02 (0.01, 0.02) & 0.02 (0.01, 0.03) & 0.02 (0.01, 0.03) \\ 
\cline{2-7}
 & 0.2 & 0.01 (0.00, 0.02) & 0.02 (0.01, 0.03) & 0.00 (0.00, 0.01) & 0.01 (0.01, 0.02) & 0.01 (0.01, 0.01) \\ 
\cline{1-7}
 & 0.05 & 0.03 (0.02, 0.05) & 0.03 (0.02, 0.04) & 0.02 (0.01, 0.04) & 0.03 (0.02, 0.04) & 0.02 (0.01, 0.04) \\ 
\cline{2-7}
15 & 0.1 & 0.02 (0.01, 0.03) & 0.03 (0.02, 0.04) & 0.01 (0.01, 0.02) & 0.02 (0.01, 0.03) & 0.01 (0.01, 0.02) \\ 
\cline{2-7}
 & 0.2 & 0.01 (0.00, 0.02) & 0.02 (0.02, 0.03) & 0.00 (0.00, 0.01) & 0.02 (0.01, 0.02) & 0.01 (0.00, 0.01) \\
\hline \hline
 & 0.05 & 0.03 (0.02, 0.05) & 0.02 (0.01, 0.03) & 0.03 (0.01, 0.04) & 0.02 (0.01, 0.03) & 0.01 (0.00, 0.02) \\ 
\cline{2-7}
5 & 0.1 & 0.03 (0.02, 0.04) & 0.03 (0.01, 0.04) & 0.02 (0.01, 0.04) & 0.02 (0.01, 0.03) & 0.02 (0.01, 0.03) \\ 
\cline{2-7}
 & 0.2 & 0.02 (0.01, 0.03) & 0.02 (0.01, 0.03) & 0.01 (0.00, 0.01) & 0.02 (0.01, 0.02) & 0.01 (0.01, 0.02) \\ 
\cline{1-7}
 & 0.05 & 0.03 (0.01, 0.04) & 0.03 (0.01, 0.04) & 0.02 (0.01, 0.04) & 0.02 (0.01, 0.03) & 0.02 (0.01, 0.03) \\ 
\cline{2-7}
10 & 0.1 & 0.02 (0.01, 0.03) & 0.02 (0.02, 0.04) & 0.01 (0.01, 0.02) & 0.02 (0.01, 0.03) & 0.01 (0.01, 0.02) \\ 
\cline{2-7}
 & 0.2 & 0.01 (0.00, 0.02) & 0.02 (0.01, 0.03) & 0.00 (0.00, 0.01) & 0.01 (0.01, 0.02) & 0.01 (0.01, 0.01) \\ 
\cline{1-7}
 & 0.05 & 0.02 (0.01, 0.04) & 0.02 (0.02, 0.04) & 0.02 (0.01, 0.04) & 0.02 (0.01, 0.03) & 0.01 (0.01, 0.02) \\ 
\cline{2-7}
15 & 0.1 & 0.02 (0.01, 0.03) & 0.02 (0.01, 0.04) & 0.01 (0.01, 0.02) & 0.02 (0.01, 0.03) & 0.01 (0.01, 0.02) \\ 
\cline{2-7}
 & 0.2 & 0.01 (0.00, 0.01) & 0.02 (0.01, 0.03) & 0.00 (0.00, 0.01) & 0.01 (0.01, 0.02) & 0.01 (0.01, 0.02) \\
\hline
\end{tabular}
\end{table}
\clearpage

We finally turn to the computation times. We present the total runtimes in seconds for the ten runs with different starting patterns. This corresponds to the realistic situation of selecting as (pseudo-)barycenter the solution with the smallest local minimum in ten runs. It also provides some more stability for the means and quantiles given in Tables~\ref{tab:timebof10} and~\ref{tab:timebof10pimped}.

Table~\ref{tab:timebof10} gives the runtimes for the original algorithm. We see that individual runs of as large scenarios as 100 patterns with 100 points on average only take a few seconds.

From Table~\ref{tab:timebof10pimped} we see that the runtimes for the improved algorithm are even considerably lower, and for some of the larger problems they have less than half of the original runtimes (at virtually no loss with regard to the objective function value as we have seen before). It is to be expected that this ratio becomes even smaller if the problem size is further increased.

\begin{table}
  \caption{\emph{Original algorithm.} Time in seconds for a total of ten runs with random starting patterns. Means over 100 instances, with $5\%$- and $95\%$-quantile in parentheses. The first block of $9$ rows corresponds to the deterministic cardinality, the second block to the low-variance cardinality and the last block to the high-variance cardinality.}
  \label{tab:timebof10}
\vspace*{3mm}
  
\footnotesize
\centering
\begin{tabular}{|c|c|r|r|r|r|r|}
  \hline
 N & $\sigma$ & 20/20 & 20/50 & 50/20 & 50/50 & 100/100 \\ 
\hline
 & 0.05 & 0.48 (0.47, 0.50) & 0.89 (0.85, 0.93) & 1.02 (0.99, 1.05) & 2.37 (2.24, 2.52) & 20.79 (19.42, 22.32) \\ 
\cline{2-7}
5 & 0.1 & 0.52 (0.51, 0.54) & 1.05 (0.99, 1.10) & 1.14 (1.10, 1.19) & 3.13 (2.92, 3.36) & 30.26 (27.51, 32.92) \\ 
\cline{2-7}
 & 0.2 & 0.51 (0.49, 0.53) & 1.14 (1.07, 1.23) & 1.05 (0.99, 1.12) & 3.55 (3.26, 3.96) & 37.95 (34.70, 42.22) \\ 
\cline{1-7}
 & 0.05 & 0.49 (0.48, 0.51) & 0.93 (0.89, 0.97) & 1.05 (1.01, 1.08) & 2.49 (2.37, 2.64) & 22.15 (20.33, 24.63) \\ 
\cline{2-7}
10 & 0.1 & 0.52 (0.51, 0.54) & 1.08 (1.03, 1.15) & 1.14 (1.09, 1.20) & 3.18 (2.96, 3.47) & 33.83 (30.03, 37.89) \\ 
\cline{2-7}
 & 0.2 & 0.48 (0.45, 0.50) & 1.16 (1.10, 1.24) & 0.91 (0.85, 0.97) & 3.56 (3.27, 3.91) & 40.30 (36.14, 44.69) \\ 
\cline{1-7}
 & 0.05 & 0.50 (0.49, 0.51) & 0.95 (0.91, 1.00) & 1.06 (1.03, 1.09) & 2.60 (2.44, 2.80) & 24.25 (22.51, 25.93) \\ 
\cline{2-7}
15 & 0.1 & 0.52 (0.50, 0.53) & 1.08 (1.02, 1.15) & 1.13 (1.08, 1.17) & 3.20 (2.93, 3.53) & 32.70 (29.62, 35.91) \\ 
\cline{2-7}
 & 0.2 & 0.49 (0.47, 0.51) & 1.14 (1.08, 1.22) & 0.91 (0.85, 0.99) & 3.51 (3.12, 3.85) & 37.77 (33.86, 42.21) \\ 
\hline \hline
 & 0.05 & 0.51 (0.49, 0.53) & 0.92 (0.89, 0.96) & 1.05 (1.02, 1.08) & 2.51 (2.38, 2.67) & 21.43 (19.89, 23.47) \\ 
\cline{2-7}
5 & 0.1 & 0.53 (0.52, 0.54) & 1.08 (1.02, 1.13) & 1.17 (1.12, 1.22) & 3.20 (3.00, 3.44) & 30.93 (28.03, 33.42) \\ 
\cline{2-7}
 & 0.2 & 0.54 (0.51, 0.57) & 1.18 (1.10, 1.25) & 1.08 (1.00, 1.15) & 3.71 (3.34, 4.06) & 38.97 (34.69, 42.74) \\ 
\cline{1-7}
 & 0.05 & 0.51 (0.49, 0.52) & 0.96 (0.91, 1.00) & 1.07 (1.04, 1.11) & 2.57 (2.42, 2.73) & 22.70 (20.97, 24.87) \\ 
\cline{2-7}
10 & 0.1 & 0.54 (0.52, 0.56) & 1.12 (1.06, 1.20) & 1.17 (1.12, 1.23) & 3.36 (3.07, 3.68) & 34.18 (30.47, 37.91) \\ 
\cline{2-7}
 & 0.2 & 0.49 (0.46, 0.51) & 1.19 (1.12, 1.27) & 0.93 (0.87, 1.00) & 3.72 (3.35, 4.11) & 41.21 (35.90, 46.22) \\ 
\cline{1-7}
 & 0.05 & 0.51 (0.50, 0.53) & 0.98 (0.94, 1.03) & 1.09 (1.06, 1.13) & 2.70 (2.51, 2.93) & 25.07 (22.73, 27.86) \\ 
\cline{2-7}
15 & 0.1 & 0.53 (0.52, 0.55) & 1.11 (1.05, 1.19) & 1.17 (1.12, 1.22) & 3.34 (3.11, 3.62) & 33.94 (30.91, 37.06) \\ 
\cline{2-7}
 & 0.2 & 0.50 (0.46, 0.52) & 1.19 (1.13, 1.26) & 0.93 (0.87, 1.03) & 3.66 (3.35, 3.95) & 38.60 (35.05, 42.61) \\ 
\hline \hline
 & 0.05 & 0.58 (0.53, 0.64) & 1.27 (1.11, 1.53) & 1.37 (1.24, 1.55) & 4.12 (3.38, 4.97) & 39.66 (33.33, 46.96) \\ 
\cline{2-7}
5 & 0.1 & 0.62 (0.57, 0.69) & 1.46 (1.26, 1.74) & 1.57 (1.36, 1.83) & 5.04 (4.05, 6.12) & 55.56 (45.67, 65.94) \\ 
\cline{2-7}
 & 0.2 & 0.59 (0.53, 0.67) & 1.58 (1.37, 1.88) & 1.30 (1.11, 1.47) & 5.62 (4.64, 6.86) & 65.86 (54.51, 77.49) \\ 
\cline{1-7}
 & 0.05 & 0.59 (0.54, 0.65) & 1.27 (1.09, 1.51) & 1.36 (1.24, 1.55) & 4.10 (3.38, 4.99) & 41.13 (35.65, 47.79) \\ 
\cline{2-7}
10 & 0.1 & 0.60 (0.55, 0.67) & 1.48 (1.29, 1.79) & 1.48 (1.29, 1.68) & 5.14 (4.34, 6.14) & 60.16 (50.16, 70.58) \\ 
\cline{2-7}
 & 0.2 & 0.54 (0.49, 0.61) & 1.60 (1.37, 1.87) & 1.06 (0.94, 1.23) & 5.66 (4.62, 6.80) & 69.19 (58.12, 83.04) \\ 
\cline{1-7}
 & 0.05 & 0.59 (0.55, 0.65) & 1.32 (1.16, 1.56) & 1.41 (1.26, 1.64) & 4.31 (3.59, 5.22) & 45.37 (38.88, 51.93) \\ 
\cline{2-7}
15 & 0.1 & 0.61 (0.55, 0.70) & 1.46 (1.24, 1.69) & 1.47 (1.31, 1.66) & 5.27 (4.42, 6.41) & 60.57 (50.30, 70.68) \\ 
\cline{2-7}
 & 0.2 & 0.55 (0.49, 0.61) & 1.55 (1.37, 1.81) & 1.06 (0.94, 1.23) & 5.73 (4.72, 7.23) & 66.25 (56.73, 79.16) \\ 
\hline
\end{tabular}
\end{table}

\begin{table}
  \caption{\emph{Improved algorithm.} Time in seconds for a total of ten runs with random starting patterns. Means over 100 instances, with $5\%$- and $95\%$-quantile in parentheses. The first block of $9$ rows corresponds to the deterministic cardinality, the second block to the low-variance cardinality and the last block to the high-variance cardinality.}
  \label{tab:timebof10pimped}
\vspace*{3mm}
  
\footnotesize
\centering
\begin{tabular}{|c|c|r|r|r|r|r|}
  \hline
 N & $\sigma$ & 20/20 & 20/50 & 50/20 & 50/50 & 100/100 \\ 
\hline
 & 0.05 & 0.48 (0.47, 0.48) & 0.83 (0.81, 0.85) & 0.97 (0.96, 0.99) & 2.05 (1.99, 2.12) & 13.68 (13.02, 14.57) \\ 
\cline{2-7}
5 & 0.1 & 0.50 (0.49, 0.50) & 0.92 (0.89, 0.96) & 1.03 (1.01, 1.05) & 2.42 (2.29, 2.60) & 16.51 (15.38, 17.85) \\ 
\cline{2-7}
 & 0.2 & 0.50 (0.49, 0.50) & 0.93 (0.89, 0.98) & 1.00 (0.98, 1.02) & 2.47 (2.33, 2.65) & 18.25 (16.87, 20.57) \\ 
\cline{1-7}
 & 0.05 & 0.48 (0.48, 0.49) & 0.84 (0.82, 0.87) & 0.98 (0.97, 1.00) & 2.06 (1.97, 2.14) & 13.22 (12.43, 14.01) \\ 
\cline{2-7}
10 & 0.1 & 0.50 (0.49, 0.51) & 0.92 (0.88, 0.96) & 1.03 (1.01, 1.05) & 2.38 (2.23, 2.53) & 16.94 (15.61, 19.17) \\ 
\cline{2-7}
 & 0.2 & 0.49 (0.48, 0.50) & 0.94 (0.91, 0.98) & 0.96 (0.92, 0.98) & 2.46 (2.34, 2.63) & 18.84 (17.24, 20.60) \\ 
\cline{1-7}
 & 0.05 & 0.49 (0.48, 0.49) & 0.85 (0.82, 0.88) & 0.98 (0.97, 1.00) & 2.08 (1.99, 2.20) & 13.24 (12.47, 14.26) \\ 
\cline{2-7}
15 & 0.1 & 0.50 (0.49, 0.51) & 0.90 (0.87, 0.95) & 1.02 (1.00, 1.04) & 2.36 (2.25, 2.52) & 16.22 (14.72, 18.07) \\ 
\cline{2-7}
 & 0.2 & 0.50 (0.49, 0.50) & 0.92 (0.89, 0.95) & 0.96 (0.92, 0.99) & 2.42 (2.27, 2.57) & 17.62 (16.08, 19.33) \\ 
\hline \hline
 & 0.05 & 0.50 (0.49, 0.52) & 0.86 (0.84, 0.88) & 1.00 (0.98, 1.02) & 2.13 (2.06, 2.21) & 14.07 (13.58, 14.80) \\ 
\cline{2-7}
5 & 0.1 & 0.51 (0.50, 0.51) & 0.94 (0.91, 0.98) & 1.05 (1.03, 1.08) & 2.46 (2.35, 2.60) & 16.99 (15.77, 18.57) \\ 
\cline{2-7}
 & 0.2 & 0.52 (0.51, 0.54) & 0.95 (0.92, 0.99) & 1.02 (1.00, 1.04) & 2.53 (2.38, 2.73) & 18.72 (17.26, 20.73) \\ 
\cline{1-7}
 & 0.05 & 0.49 (0.49, 0.50) & 0.87 (0.84, 0.90) & 1.01 (0.99, 1.03) & 2.12 (2.04, 2.24) & 13.28 (12.62, 14.02) \\ 
\cline{2-7}
10 & 0.1 & 0.51 (0.50, 0.52) & 0.94 (0.90, 0.98) & 1.05 (1.03, 1.08) & 2.41 (2.30, 2.55) & 17.05 (15.60, 18.85) \\ 
\cline{2-7}
 & 0.2 & 0.50 (0.49, 0.51) & 0.95 (0.93, 0.99) & 0.98 (0.95, 1.01) & 2.50 (2.37, 2.67) & 19.47 (17.74, 21.83) \\ 
\cline{1-7}
 & 0.05 & 0.49 (0.49, 0.50) & 0.87 (0.84, 0.90) & 1.01 (0.99, 1.03) & 2.15 (2.06, 2.25) & 13.62 (12.79, 14.66) \\ 
\cline{2-7}
15 & 0.1 & 0.51 (0.50, 0.52) & 0.92 (0.88, 0.96) & 1.04 (1.02, 1.06) & 2.39 (2.26, 2.58) & 16.69 (15.14, 18.99) \\ 
\cline{2-7}
 & 0.2 & 0.50 (0.50, 0.51) & 0.95 (0.91, 0.98) & 0.98 (0.95, 1.01) & 2.48 (2.35, 2.68) & 18.16 (16.50, 20.29) \\ 
\hline \hline
 & 0.05 & 0.56 (0.52, 0.61) & 1.14 (1.00, 1.33) & 1.24 (1.14, 1.36) & 3.21 (2.82, 3.73) & 22.47 (19.86, 25.53) \\ 
\cline{2-7}
5 & 0.1 & 0.58 (0.54, 0.64) & 1.22 (1.07, 1.40) & 1.31 (1.18, 1.45) & 3.65 (3.10, 4.32) & 26.79 (23.16, 31.02) \\ 
\cline{2-7}
 & 0.2 & 0.58 (0.54, 0.63) & 1.23 (1.09, 1.43) & 1.21 (1.12, 1.30) & 3.64 (3.25, 4.18) & 29.41 (25.54, 34.47) \\ 
\cline{1-7}
 & 0.05 & 0.56 (0.53, 0.60) & 1.11 (0.97, 1.28) & 1.23 (1.13, 1.37) & 3.10 (2.71, 3.54) & 21.82 (19.13, 24.59) \\ 
\cline{2-7}
10 & 0.1 & 0.57 (0.53, 0.63) & 1.19 (1.05, 1.38) & 1.27 (1.15, 1.41) & 3.51 (3.02, 4.17) & 27.49 (23.81, 32.18) \\ 
\cline{2-7}
 & 0.2 & 0.56 (0.52, 0.61) & 1.22 (1.06, 1.37) & 1.14 (1.06, 1.22) & 3.59 (3.14, 4.10) & 30.67 (26.84, 35.73) \\ 
\cline{1-7}
 & 0.05 & 0.56 (0.52, 0.60) & 1.12 (0.99, 1.27) & 1.24 (1.13, 1.37) & 3.17 (2.76, 3.71) & 22.64 (19.72, 25.59) \\ 
\cline{2-7}
15 & 0.1 & 0.58 (0.54, 0.64) & 1.17 (1.02, 1.33) & 1.28 (1.16, 1.42) & 3.49 (3.04, 3.93) & 27.21 (22.89, 31.83) \\ 
\cline{2-7}
 & 0.2 & 0.56 (0.53, 0.61) & 1.19 (1.05, 1.36) & 1.14 (1.04, 1.27) & 3.60 (3.06, 4.35) & 29.11 (24.94, 34.51) \\ 
\hline
\end{tabular}
\end{table}
\clearpage

\section{Applications}
\label{sec:applications}

The following analyses are all performed in \textsf{R}, see \cite{R2019}, with the help of the package \textsf{spatstat}, see \cite{BaddeleyEtAl2015}.

\subsection{Street theft in Bogot\'a}
\label{ssec:bogota}

We investigate a data set of person-related street thefts in Bogot\'a, Colombia, during the years 2012--2017. This data set is part of a huge data set based on a large number of types of crimes collected by the Direcci\'on de Investigaci\'on Criminal e Interpol (DIJIN), a department of the Colombian National Police. We acknowledge DIJIN and the General Santander National Police Academy (ECSAN) for allowing us to use this data. In particular the cases of street theft in Bogot\'a consist of muggings, which involve the use of force or threat, as well as pickpocketing. They do not include theft of vehicles, breaking into cars, etc. Here we focus on the locality of Kennedy, a roughly $7.5\,\mathrm{km} \times 7.5\,\mathrm{km}$ administrative ward in the west of the city, because this area is considered by the police as being more dangerous with a higher average number of crime events compared to the rest of Bogot\'a. The total number of street thefts in Kennedy for the considered period is 25\hbit 840.

Since a plot of weekly numbers of crimes reveals no clear seasonal pattern and since weekly patterns (and hence their barycenters) are of a good size to be interpreted graphically, we compute yearly barycenters for these weekly patterns. Thus we may think of a barycenter pattern as representing a ``typical week'' of street thefts in the corresponding year. As penalty parameter we chose 1000\,m. Since street information was not directly available to us, we chose Euclidean distance as a metric and set $p=2$ to be able to relate to our simulation results in the previous section.

Each barycenter was computed based on 100 starting patterns with cardinalities regularly scattered over the integer numbers between the 0.45 to the 0.7 quantiles of the weekly number of data points for the corresponding year. We chose this somewhat asymmetrically around the median, because the mean number of thefts (the theoretical number of points in the barycenter if the penalty becomes large) was typically quite a bit to the right of the median, and also because our algorithm is somewhat better at deleting than at adding points.

Figure~\ref{fig:bogota_barys} depicts the obtained barycenters, which except for the last pattern have cardinalities just slightly below the average weekly numbers of muggings of 51.7, 57.6, 52.8, 54.4, 82.5 and 196.9, respectively. The barycenters for the years 2012--2015 seem to be largely similar. Then in 2016 we start seeing patterns of denser structures forming along a line to the west and a center in the south-east of Kennedy. These can be actually identified as a main street and a major intersection in the densely populated parts of Kennedy.

\begin{figure}[h]
  \includegraphics[width=1.02\linewidth]{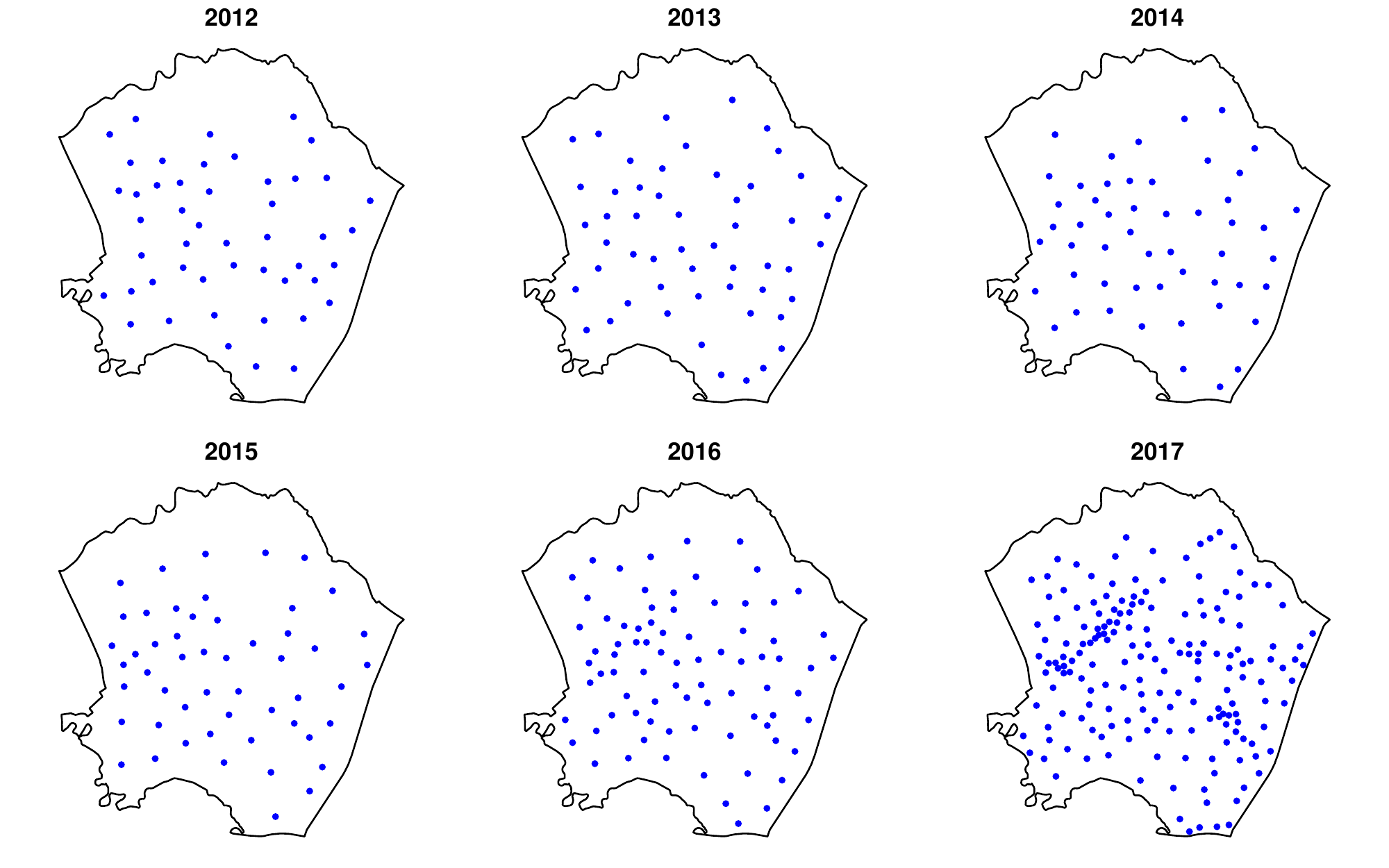}
\vspace*{-8mm}
  
\caption{Barycenters of weekly street thefts in the localidad of Kennedy in Bogot\'a. The cardinalities are 48, 53, 52, 52, 80 and 175, respectively.}
\label{fig:bogota_barys} 
\end{figure}

\subsection{Assault cases in Valencia}
\label{ssec:valencia}

As a second application we analyze cases of assault in Valencia, Spain, reported to the police in the years 2010--2017. Since the addresses of the assaults and the street network are available, we treat this data as point patterns on a graph using shortest-path distance and $p=1$. We acknowledge the local police in Valencia city together with the 112 emergency phone that kindly provided us the data after cleaning and removing any personal information.

We split up the graph and analyze the four central districts of Ciutat Vella, Eixample, Extramurs and El Pla del Real separately. For this we assigned each assault case to its district, but added also streets from other districts at the boundary, in order to enable more natural shortest-path computations. The north-south and east-west extensions of the districts vary roughly between 1.6 and 3.3\,km.

In the time domain we split up the assault data by year and season into seven winter patterns (data from December, January and February) and eight summer patterns (data from June, July and August), discarding for the present analysis data from the intermediate seasons, as well as from January and February 2010 and December 2017. We then computed barycenters per main season and district, obtaining ``typical'' assault patterns for summer and winter for each of the four districts considered, see Figure~\ref{fig:projbarycentersraw}. The penalty was chosen as 800\,m with respect to the shortest-path distance.

As mentioned in Subsection~\ref{ssec:original} when describing the subroutine \optimBary that finds cluster centers on networks, we can calculate all distances that are relevant to the algorithm beforehand. For this we use the corresponding functionality built into the \texttt{linnet} objects in \textsf{spatstat}, which is very fast for the present purpose. On a standard laptop with a 1.6 GHz Intel i5 processor the computation took only about four seconds for the largest data set, which is Ciutat Vella in summer with a total of $2494$ vertices (1676 street crossings plus 818 data points).

For the starting patterns in each district in summer and in winter, we chose $n$ random points, where $n$ ranged from $0.8$ times to $1.15$ times the median cardinality of the data point patterns. Since our present implementation of the {\kmeansbary} algorithm on graphs runs without \optimDelete and \optimAdd steps, we based each barycenter on a large sample of $500$ starting patterns for each $n$. This resulted in an overall total of $101\hbit500$ calls to our algorithm for the eight scenarios, which on average took $0.57$ seconds each, using the precomputed distance matrices. One calculation in the largest setting (Ciutat Vella in summer) takes about $0.82$ seconds and in the smallest (El Pla del Real in summer) about $0.08$ seconds. The increase in the objective function was only up to 1\% when decreasing the total number of calls to our algorithm by a factor of $20$, resulting in a total computation time of well under one hour.

Due to the choice of $p=1$ it happens quite frequently that there are several optimal centers for some of the clusters obtained after convergence of the \kmeansbary algorithm. In this case we take the average of their coordinates and project the result back onto the graph in order to obtain a somewhat more balanced result. The resulting point does not necessarily realize the same cluster cost as the original center points, but on a real street network it is not to be expected that the cost becomes considerably worse. In fact, for the data considered, the results hardly differed at all.

Figure~\ref{fig:projbarycentersraw} shows the barycenters for the different districts in winter and in summer. It seems that there are no very clear effects of the season on the assaults. In the first district (Ciutat Vella) there are substantially more assaults in summer, but their spatial distribution in the barycenter is more or less similar. In Eixample we see a concentration of assault cases in summer in the Barrio Ruzafa in the southern half of the district, whereas cases are more or less equally spread in winter. A notable feature is the occurrence of 23 barycenter points at a single crossing in summer and 7 points at the same crossing in winter with further points close by. This is due to the cul-de-sac visible in Figure~\ref{fig:projbarycentersraw}, which in reality forms sort of a backyard that makes the area easy for assaults, especially in summer time when there are more people (especially tourists) moving around those parts of the city. The spot is well-known to the police and in recent years the number of assaults has decreased due to police interventions. The barycenters clearly reflect this (former) assault hot spot.

In the district of Extramurs both barycenters are more or less spread over the whole district, with two clusters of assaults occurring in the east and south. Both clusters are much more pronounced in winter. In the district of El Pla del Real there is some concentration in the winter month in the east and south-east. Apart from that the only noticeable difference is that there are substantially more assaults in winter than in summer, which may well be related to the fact that this is a popular student district.
\begin{figure}[h]
  \includegraphics[scale=0.75]{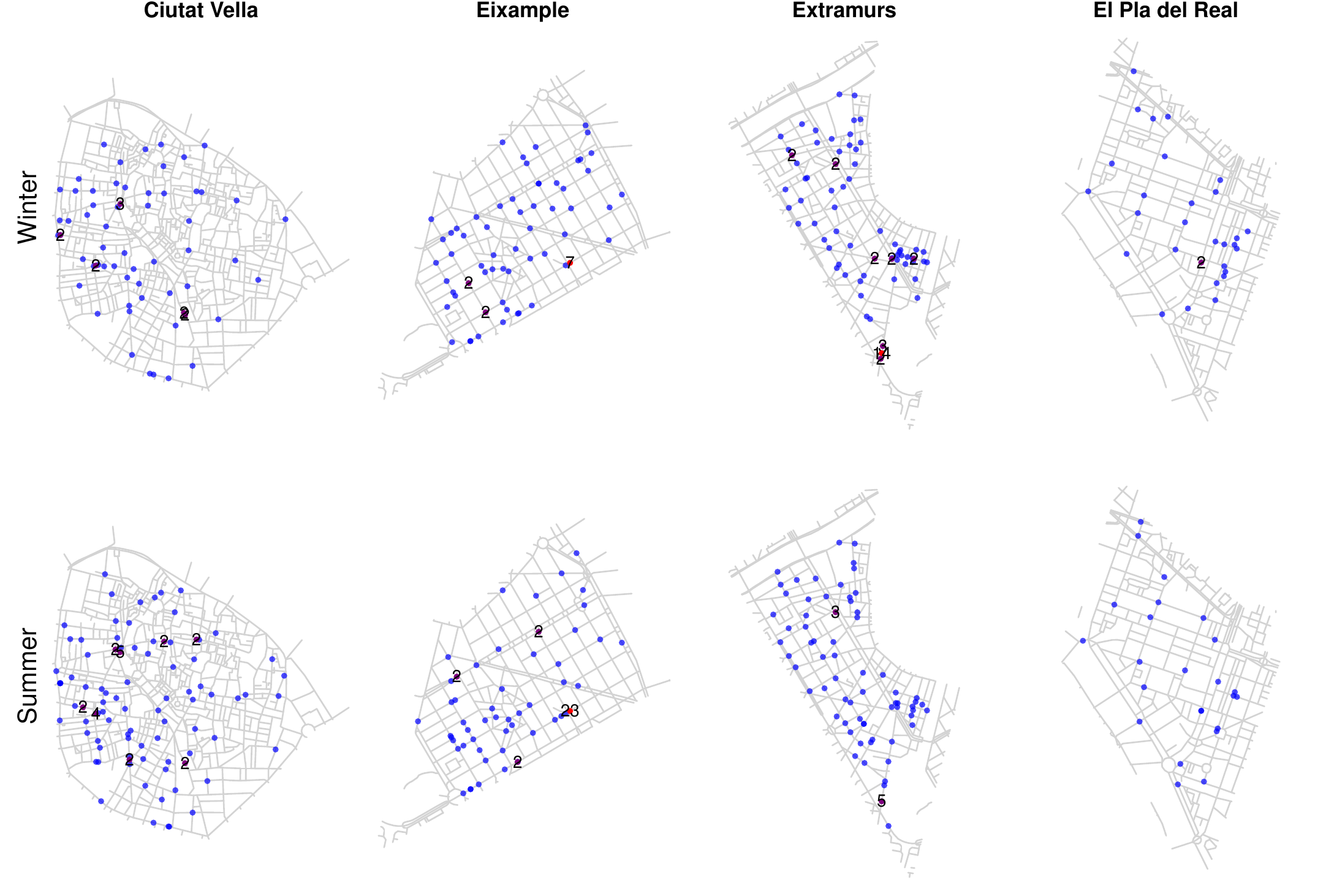}
\vspace*{-7mm}
  
\caption{Barycenters of cases of assault for different districts of Valencia in winter and in summer. The numbers indicate multiplicities if there are several points at a single location. The cardinalities of the barycenters are 68, 69, 88, 30 (winter) and 103, 79, 74, 24 (summer).}
\label{fig:projbarycentersraw} 
\end{figure}

\section{Discussion and outlook}

In this paper we have introduced the $p$-th order TT- and RTT-metrics, which allow us to measure distances between point patterns in an intuitive way, generalizing several earlier metrics. We have investigated $q$-th order barycenters with respect to the TT-metric and presented two variants of a heuristic algorithm. These variants return local minimizers of the Fr\'echet functional that mimic properties of the actual barycenter well and attain consistent objective function values. They are computable in a few seconds for medium-sized problems, such as 100 patterns of 100 points.

For the proof of Proposition~\ref{prop:mdagengen} it was necessary to set $p=q$. While such a choice may seem natural, we point out that due to the separate interpretations of $p$ as the order for matching points in the metric on $\mfnfin$ (higher $p$ tends to balance out the matching distances) and $q$ as the order of the empirical moment in $\mfnfin$, it may well be desirable to combine $p \neq q$. 

In the present paper we have only dealt with the descriptive aspects of barycenters. It is thus clear that our applications in Section~\ref{sec:applications} can only be seen as explorative studies. In order to determine whether differences between group barycenters are statistically significant, we need to take the distribution of the point patterns around their barycenters into account and perform appropriate hypothesis tests.

Fortunately, the Fr\'echet functional~\eqref{eq:barydef} provides us with a natural quantification of scatter around the barycenter. For $q=2$ it is quite common to refer to
\begin{equation*}
  \var(\xi_1,\ldots,\xi_k) = \min_{\zeta \in \mfnfin} \frac{1}{k} \sum_{j=1}^{k} \tau(\xi_j,\zeta)^2
\end{equation*}
as (empirical) Fr\'echet variance, due to Equation~\eqref{eq:frechet2}. Detailed asymptotic theory for performing an analysis of variance (ANOVA) in metric spaces based on comparing Fr\'echet variances has been recently developed in \cite{DubeyMueller2017}. The application and adaptation of this theory for the point pattern space and an investigation of the performance of our heuristic algorithm in this context will be the subject of a future paper.

In a similar vein, a fast computation of reliable barycenters opens many doors to more advanced procedures in statistics and machine learning. This includes barycenter-based dimension reduction techniques, such as Wasserstein dictionary learning, see \cite{SchmitzEtAl2018}, and functional principal component analysis of point patterns evolving in time, see \cite{DubeyMueller2019}.

\section*{Appendix: Proofs left out in the main text}

\renewcommand{\theequation}{A.\arabic{equation}}
\renewcommand{\thethm}{A.\arabic{thm}}

\begin{lem}
\label{lem:metric}
  Let $C > 0$, $\widetilde{C} \in (0,2C]$ and let $(\mcx,d)$ be a metric space with $\diam(\mcx) = \sup_{x,y \in \mcx} d(x,y) \leq 2C$. For $k \in \NN$ set $\mcx' = \mcx \cup \{\aleph_1,\ldots,\aleph_k\}$, where $\aleph_1,\ldots \aleph_k \not\in \mcx$ are pairwise different, and define
  \begin{equation*}
    d'(x,y) = \begin{cases}
      d(x,y) &\text{if $x,y \in \mcx$;} \\
      C &\text{if $\{x,y\} \cap \mcx \neq \emptyset$ and $\{x,y\} \cap \{\aleph_1,\ldots,\aleph_k\} \neq \emptyset$;} \\
      \widetilde{C} &\text{if $\{x,y\} \subset \{\aleph_1,\ldots,\aleph_k\}$ and $x \neq y$;} \\
      0 &\text{if $x=y=\aleph_i$ for some $i \in [k]$.}
    \end{cases}
  \end{equation*}
  Then $(\mcx',d')$ is a metric space.
\end{lem}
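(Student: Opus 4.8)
The plan is to verify the metric axioms, of which non-negativity, symmetry, and the identity of indiscernibles are immediate from the defining cases: every listed value is nonnegative, the definition is manifestly symmetric in $x$ and $y$, and $d'(x,y) = 0$ holds exactly when $x = y$ (using $C, \widetilde{C} > 0$ together with the fact that $d$ is already a metric on $\mcx$). The only substantial work is the triangle inequality $d'(x,z) \leq d'(x,y) + d'(y,z)$, which I would establish by a case analysis according to how many of the three points $x, y, z$ lie in $\mcx$ and how many lie in the auxiliary set $A := \{\aleph_1, \ldots, \aleph_k\}$.

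First, when all of $x, y, z \in \mcx$ the inequality reduces to the triangle inequality for $d$. When exactly one point is auxiliary, the interesting subcase is $y \in A$ and $x, z \in \mcx$, where one needs $d(x,z) \leq C + C = 2C$; this is precisely where the hypothesis $\diam(\mcx) \leq 2C$ enters. In the remaining one-auxiliary subcases an endpoint carries the auxiliary label, forcing the left-hand side to equal $C$, which is bounded by $C$ plus a nonnegative term, so these are immediate.

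When two of the points are auxiliary, the key subcase is $x, z \in A$ with $x \neq z$ and $y \in \mcx$: here $d'(x,z) = \widetilde{C}$ must be bounded by $C + C = 2C$, and this is exactly where $\widetilde{C} \leq 2C$ is used. The subcases with a repeated auxiliary label, or with the middle point $y$ auxiliary, collapse to trivial inequalities. Finally, when all three points lie in $A$, the distance takes only the values $0$ (equal points) and $\widetilde{C}$ (distinct points), and one checks directly that $d'(x,z) \in \{0, \widetilde{C}\}$ is at most $d'(x,y) + d'(y,z) \in \{0, \widetilde{C}, 2\widetilde{C}\}$ in every configuration.

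The analysis is entirely elementary; the main obstacle is simply to organize the handful of essentially distinct configurations without overlooking one, and to isolate the two places where the structural hypotheses $\diam(\mcx) \leq 2C$ and $\widetilde{C} \leq 2C$ are genuinely needed. No step requires anything beyond nonnegativity of $d$ and these two inequalities.
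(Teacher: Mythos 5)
Your proposal is correct and follows essentially the same route as the paper: direct verification of the metric axioms, with the triangle inequality handled by a case analysis that invokes $\diam(\mcx) \leq 2C$ precisely when both endpoints lie in $\mcx$ with an auxiliary middle point, and $\widetilde{C} \leq 2C$ precisely when both endpoints are auxiliary with a middle point in $\mcx$. The only cosmetic difference is that the paper dispatches the within-$\mcx$ and within-$\{\aleph_1,\ldots,\aleph_k\}$ cases at once by noting that $d$ and the scaled discrete metric $\widetilde{C}\hbit\one\{x \neq y\}$ are metrics on their respective pieces, whereas you enumerate those configurations explicitly; the substance is identical.
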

\begin{proof}
  Identity and symmetry properties of the map $d' \colon \mcx' \times \mcx' \to \Rplus$ follow immediately.
  Since $d$ is a metric on $\mcx$ and $\tilde{d}(x,y) = \widetilde{C} \one\{x \neq y\}$ is a metric on $\mcy = \{\aleph_1,\ldots,\aleph_k\}$, 
  we only have to check the triangle inequality for a few special cases. If $x \in \mcx$ and $y \in \mcy$ (or vice versa), then $d(x,y) = C$. Since one of $d(x,z)$ and $d(z,y)$ has to be $=C$ regardless of $z \in \mcx'$, we obtain $d(x,y) \leq d(x,z) + d(z,y)$. If $x,y \in \mcx$ and $z \in \mcy$, then
  \begin{equation*}
    d(x,y) \leq \diam(\mcx) \leq 2C = d(x,z) + d(z,y).
  \end{equation*}
  Likewise, if $x,y \in \mcy$ and $z \in \mcx$, then
  \begin{equation*}
    d(x,y) \leq \widetilde{C} \leq 2C = d(x,z) + d(z,y).
  \end{equation*}  
  \vspace*{-8mm}
  
\end{proof}

\begin{proof}[Proof of Theorem~\ref{thm:ttassignment}]
  Since $\bartau(\xi,\eta) = \frac{1}{n^{1/p}} \tau(\xi,\eta)$, it is enough to show the statement for $\tau$.
  
  Let $\pi \in S_n$ be a permutation that minimizes $\sum_{i=1}^n d'(x_i,y_{\pi(i)})^p$. Writing $I = \bigl\{i \in [m];\, d(x_i,y_{\pi(i)}) < 2^{1/p} C \bigr\}$ we obtain
  \begin{equation}
    \label{eq:dprimetable}
    d'(x_i, y_{\pi(i)}) =
    \begin{cases}
      d(x_i,y_{\pi(i)}) &\text{if $i \in I$}; \\
      2^{1/p}C            &\text{if $i \in [m] \setminus I$}; \\
      C                   &\text{if $i \in [n] \setminus [m]$}.
    \end{cases}
  \end{equation}
  Therefore, enumerating $I$ in arbitrary order as $\{i_1, \ldots,i_l\}$ for some $l \in [m]$ and setting $j_r := \pi(i_r)$ for $r \in [l]$, we have
  \begin{equation}
    \label{eq:ttassign_equal}
  \begin{split}
    \sum_{i=1}^n d'(x_i,y_{\pi(i)})^p &= \sum_{r=1}^{l} d(x_{i_r},y_{j_r})^p + (m-l) (2^{1/p}C)^p + (n-m) C^p \\
    &= (m+n-2l) C^p + \sum_{r=1}^{l} d(x_{i_r},y_{j_r})^p.
  \end{split}
\end{equation}
Thus $\tau(\xi,\eta)^p \leq \min_{\pi \in S_n} \sum_{i=1}^n d'(x_i,y_{\pi(i)})^p$.

Conversely, let $(i_1,\ldots,i_l; j_1,\ldots,j_l) \in S(m,n)$ minimize $(m+n-2l) C^p + \sum_{r=1}^{l} d(x_{i_r},y_{j_r})^p$. This implies $d(x_{i_r},y_{j_r}) \leq 2^{1/p} C$ for all $r \in [l]$, because otherwise we could obtain a smaller value by removing $i_r$, $j_r$ from the vector. Writing $I = \{i_1, \ldots,i_l\}$, $J = \{j_1, \ldots,j_l\}$ it implies also that $d(x_{i},y_{j}) \geq 2^{1/p} C$ for all $i \in [m] \setminus I$ and $j \in [n] \setminus J$, because otherwise we could obtain a smaller value by adding $i$, $j$ to the vector. Let then $\pi \in S_n$ be any permutation satisfying $\pi(i_r) = \pi(j_r)$ for all $r \in [l]$. With this $\pi$ we obtain exactly the $d'$-distances in~\eqref{eq:dprimetable} for all $i \in [n]$ and hence \eqref{eq:ttassign_equal} holds again. Thus $\min_{\pi \in S_n} \sum_{i=1}^n d'(x_i,y_{\pi(i)})^p \leq \tau(\xi,\eta)^p$.
\end{proof}

\begin{proof}[Proof of Proposition~\ref{prop:metrics}]
  We start with the map $\tau \colon \mfnfin \times \mfnfin \to \Rplus$. If $\xi=\eta$, then $m=n$ and there is a permutation $\pi \in S_n$ such that $x_i = y_{\pi(i)}$ for $1 \leq i \leq n$. Hence $\tau(\xi,\eta)=0$, choosing $l=n$ and $(i_1,\ldots, i_l; j_1,\ldots,j_l) = (1,\ldots,n,\pi(1),\ldots,\pi(n))$. If on the other hand $\tau(\xi,\eta) = 0$, we must have $l=m=n$ to be able to achieve $m+n-2l=0$ and there must be $(i_1,\ldots,i_n; j_1,\ldots,j_n) \in S(n,n)$ such that $d(x_{i_r},y_{j_r})=0$ for $1 \leq r \leq n$. Since the $d$ is a metric, this yields $\xi = \sum_{r=1}^n \delta_{i_r} = \sum_{r=1}^n \delta_{j_r} = \eta$. The symmetry of $\tau$ is immediately clear from the symmetric form of \eqref{eq:ttdef}.

  For the proof of the triangle inequality we use the metric space $(\mcx',d')$ introduced before Theorem~\ref{thm:ttassignment}. Let $\xi,\eta,\zeta \in \mfnfin$. After filling up patterns to the maximum of the three cardinalities by adding points at the auxiliary location $\aleph$, we may assume that $\xi = \sum_{i=1}^n \delta_{x_i}$, $\eta = \sum_{j=1}^n \delta_{y_j}$ and $\zeta = \sum_{k=1}^n \delta_{z_k}$ have the same cardinality. Noting that given two point patterns of the same cardinality we may add any number of extra points located at $\aleph$ to both of them without changing their $\tau$-distance, Theorem~\ref{thm:ttassignment} yields that there are $\pi_1,\pi_2 \in S_n$ such that
  \begin{equation*}
    \tau(\xi,\zeta) = \biggl( \sum_{i=1}^n d'(x_i,z_{\pi_1(i)})^p \biggr)^{1/p} \; \text{ and} \quad
    \tau(\zeta,\eta) = \biggl( \sum_{i=1}^n d'(z_i,y_{\pi_2(i)})^p \biggr)^{1/p}. \quad
  \end{equation*}
  Then $\pi = \pi_2 \circ \pi_1 \in S_n$ matches the points of $\xi$ and $\eta$ in such a way that
  \begin{equation*}
    d'(x_i,y_{\pi(i)}) \leq d'(x_i,z_{\pi_1(i)}) + d'(z_{\pi_1(i)},y_{\pi_2(\pi_1(i))})
  \end{equation*}
  and Theorem~\ref{thm:ttassignment} and the triangle inequality for the $\ell_p$-norm yields that
  \begin{equation*}
  \begin{split}
    \tau(\xi,\eta) &\leq \biggl( \sum_{i=1}^n d'(x_i,y_{\pi(i)})^p \biggr)^{1/p} \\
      &\leq \biggl( \sum_{i=1}^n d'(x_i,z_{\pi_1(i)})^p \biggr)^{1/p}
      + \biggl( \sum_{i=1}^n d'(z_{\pi_1(i)},y_{\pi_2(\pi_1(i))})^p \biggr)^{1/p} \\
      &= \tau(\xi,\zeta) + \tau(\zeta,\eta).
  \end{split}
\end{equation*}

We turn to the map $\bartau \colon \mfnfin \times \mfnfin \to \Rplus$.
Since $\bartau(\xi,\eta) = \frac{1}{\max\{\abs{\xi},\abs{\eta}\}^{1/p}} \tau(\xi,\eta)$, we may inherit the identity and symmetry properties for $\bartau$ directly from $\tau$. To show the triangle inequality, let $\xi = \sum_{i=1}^{m_1} \delta_{x_i}$, $\eta = \sum_{j=1}^{m_2} \delta_{y_j}$ and $\zeta = \sum_{k=1}^n \delta_{z_k}$ be in $\mfnfin$ and set $m_* = \max\{m_1,m_2\}$. If $n \leq m_*$, we obtain the desired result from the triangle inequality of~$\tau$ as
\begin{equation*}
\begin{split}
  \bartau(\xi,\eta) &= \frac{1}{m_*^{1/p}} \tau(\xi,\eta) \\
  &\leq \frac{1}{m_*^{1/p}} \bigl( \tau(\xi,\zeta) + \tau(\zeta,\eta) \big) \\
  &\leq \frac{1}{\max\{m_1,n\}^{1/p}} \tau(\xi,\zeta) + \frac{1}{\max\{n,m_2\}^{1/p}} \tau(\zeta,\eta) \\[1mm]
  &\leq \bartau(\xi,\zeta) + \bartau(\zeta,\eta).
\end{split}
\end{equation*}

If $n > m_*$, we use a slightly different construction for the extended metric space. Let $\mcx' = \mcx \cap \{\aleph,\aleph'\}$ for two different $\aleph, \aleph' \not\in \mcx$. Setting
\begin{equation*}
  d'(x,y) = \begin{cases}
    \min\{d(x,y), 2^{1/p} C\} &\text{if $x,y \in \mcx$},\\
    C                         &\text{if $\{x,y\} \cap \mcx \neq \emptyset$ and $\{x,y\} \cap \{\aleph,\aleph'\} \neq \emptyset$},\\
    2^{1/p}C                  &\text{if $\{x,y\} = \{\aleph,\aleph'\}$},\\
    0                         &\text{if $x=y=\aleph$ or $x=y=\aleph'$},
  \end{cases}
\end{equation*}
we obtain by Lemma~\ref{lem:metric} that $(\mcx',d')$ is again a metric space.
Setting $x_i = \aleph$ for $m_1+1 \leq i \leq n$ and $y_j = \aleph'$ for $m_2+1 \leq j \leq n$, we may define $\txi = \sum_{i=1}^n \delta_{x_i}$ and $\teta = \sum_{j=1}^n \delta_{y_j}$. Note that an optimal permutation $\pi_* \in S_{m_*}$ for $\bartau(\xi,\eta)$ can be extended to an optimal permutation $\tpi_* \in S_n$ for $\bartau(\txi,\teta)$ by setting $\tpi_*(i) = i$ for $m_*+1 \leq i \leq n$. Furthermore, for any $s,c \geq 0$ with $s \leq m_* c$, we have $\frac{1}{m_*}s \leq \frac{1}{n} \bigl( s+(n-m_*)c \bigr)$. Combining these two facts,
we obtain
\begin{equation*}
  \bartau(\xi,\eta)^p =  \frac{1}{m_*} \sum_{i=1}^{m_*} d'(x_i,y_{\pi_*(i)})^p 
  \leq \frac{1}{n} \biggl( \sum_{i=1}^{m_*} d'(x_i,y_{\pi_*(i)})^p + (n-m_*) \cdot 2C^p \biggr) = \bartau(\txi,\teta)^p,
\end{equation*}
and therefore
\begin{equation*} 
  \bartau(\xi,\eta) \leq \bartau(\txi,\teta) \leq \bartau(\txi,\zeta) + \bartau(\zeta, \teta) = \bartau(\xi,\zeta) + \bartau(\zeta, \eta),
\end{equation*}
where the second inequality holds since the cardinalities of all point patterns are equal and the equality holds by two more applications of Theorem~\ref{thm:ttassignment}.
\end{proof}

\begin{proof}[Proof of Proposition~\ref{prop:comparison}]
  The equivalence for (a) was already used in~\cite{DiezEtAl2012}. We give a quick argument for the sake of completeness.
  We may assume without loss of generality that, in an admissible path $P=(\xi_0,\ldots,\xi_N)$ for the minimization problem~\eqref{eq:spiketime},
  \begin{tightitemize}
  \item only moves from $x \in \xi$ to $y \in \eta$ occur;
  \item only points $y \in \eta$ are added;
  \item only points $x \in \xi$ are deleted;
  \end{tightitemize}
  because if any of these conditions were violated, the total cost of the path could only become larger (for the first item we use the triangle inequality for $d$). The minimization~\eqref{eq:spiketime} is then equivalent to choosing $l \in \{0,1,\ldots,\min\{m,n\}\}$ points to be moved from $\xi$-points with indices $i_1,\ldots,i_l \in [m]$ to $\eta$-points with indices $j_1,\ldots,j_l \in [n]$, respectively, at cost $d(x_{i_r},y_{j_r})$ for each move. The remaining $m-l$ points of $\xi$ are deleted at cost $C$ per deletion, and the remaining $n-l$ points of $\eta$ are added at cost $C$ per addition. This yields exactly the minimization problem~\eqref{eq:ttdef}.

  The equivalence (b) is an immediate consequence of Theorem~\ref{thm:ttassignment}.
\end{proof}

\end{document}